\documentclass{easychair}

\usepackage[T1]{fontenc}
\usepackage{amsmath, amssymb}
\usepackage{pifont}
\usepackage{subcaption}
\usepackage{url}
\usepackage{hyperref}
\usepackage{doi}
\usepackage{multirow}
\usepackage{array}
\usepackage{booktabs}
\usepackage[numbers]{natbib}
\usepackage[dvipsnames]{xcolor}
\usepackage{forest}
\usepackage{enumitem}
\usepackage{orcidlink}
\usepackage{ifthen}

\newboolean{appendix}
\setboolean{appendix}{true}

\newcommand{\appendixreference}[2]{%
  \ifthenelse{\boolean{appendix}}%
    {Appendix~\ref{#1}}%
    {\citep[Appendix #2]{ZamponiLBC26}}%
}

\usepackage{todonotes}
\newcolumntype{R}[1]{>{\raggedleft\let\newline\\\arraybackslash}p{#1}}
\newcommand*{\stlsat}{\textsc{STLSat}}

\newcommand*{\set}[1]{\{ #1 \}}

\newcommand*{\marked}[1]{\overline{#1}}
\newcommand*{\stlunop}[3][]{\mathop{\mathsf{#2}^{#1}_{#3}}}
\newcommand*{\stlbinop}[5][]{{#4} \mathbin{{#2}^{#1}_{#3}} {#5}}
\newcommand*{\evenop}{\mathsf{F}}
\newcommand*{\globop}{\mathsf{G}}
\newcommand*{\untilop}{\mathsf{U}}
\newcommand*{\releaseop}{\mathsf{R}}
\newcommand*{\suntilop}{\mathsf{sU}}
\newcommand*{\sreleaseop}{\mathsf{sR}}
\newcommand*{\until}[3]{\stlbinop{\untilop}{#1}{#2}{#3}}
\newcommand*{\release}[3]{\stlbinop{\releaseop}{#1}{#2}{#3}}
\newcommand*{\even}[2][]{\stlunop[{#1}]{\evenop}{#2}}
\newcommand*{\glob}[2][]{\stlunop[{#1}]{\globop}{#2}}
\newcommand*{\meven}[2][]{\stlunop[{#1}]{\marked{\evenop}}{#2}}
\newcommand*{\mglob}[2][]{\stlunop[{#1}]{\marked{\globop}}{#2}}
\newcommand*{\suntil}[4][]{\stlbinop[#1]{\suntilop}{#2}{#3}{#4}}
\newcommand*{\srelease}[4][]{\stlbinop[#1]{\sreleaseop}{#2}{#3}{#4}}
\newcommand*{\anybinop}[4][]{\stlbinop[{#1}]{\mathsf{B}}{#2}{#3}{#4}}
\newcommand*{\msuntil}[4][]{\stlbinop[#1]{\marked{\suntilop}}{#2}{#3}{#4}}
\newcommand*{\msrelease}[4][]{\stlbinop[#1]{\marked{\sreleaseop}}{#2}{#3}{#4}}
\newcommand*{\anyunop}[2][]{\stlunop[#1]{A}{#2}}

\renewcommand{\implies}{\rightarrow}

\renewcommand*{\exp}{\operatorname{tex}}

\newcommand{\cmark}{\textcolor{ForestGreen}{\ding{51}}}
\newcommand{\xmark}{\textcolor{red}{\ding{55}}}

\newcommand*{\pactive}{P}

\newcommand{\tinv}[0]{\mathcal{T}}

\newcommand*{\oldjump}{$\mathsf{JUMP}_\mathsf{Old}$}
\newcommand*{\stltree}{\textsc{STLTree}}

\newcommand{\parof}[2]{\langle #1\rangle_{#2}}

\newcommand{\id}[1]{\text{id}(#1)}

\newtheorem{theorem}{Theorem}
\newtheorem{lemma}[theorem]{Lemma}
\newtheorem{corollary}[theorem]{Corollary}

\begin{document}

\title{\stlsat{}---An Improved Tableau for Satisfiability Checking of Signal Temporal Logic Formulas}
\titlerunning{\stlsat{}}

\author{%
  Marco Zamponi\inst{1} \orcidlink{0009-0002-1855-4469} \and
  Florian Lammel\inst{2} \and
  Ezio Bartocci\inst{2} \orcidlink{0000-0002-8004-6601} \and
  Michele Chiari\inst{2,3} \orcidlink{0000-0001-7742-9233}}

\institute{%
  IMT Lucca, Lucca, Italy, \email{name.surname@imtlucca.it} \and
  TU Wien, Vienna, Austria, \email{name.surname@tuwien.ac.at} \and
  AIT Austrian Institute of Technology, Vienna, Austria,
  \email{name.surname@ait.ac.at}
}

\authorrunning{Zamponi, Lammel, Bartocci and Chiari}

\maketitle

\begin{abstract}
    Signal Temporal Logic (STL) is a formalism used to describe temporal properties of real-valued signals in cyber-physical systems.
    In mission- and safety-critical domains, specifications often consist of large collections of STL formulas, making consistency checking and requirement analysis a major engineering bottleneck.
    Tableau-based satisfiability procedures are a natural way to address this problem.
    Two of the authors of this paper contributed to the only existing tree-shaped tableau for bounded discrete-time STL, but we have recently found out that the procedure can return incorrect verdicts for some STL formulas.
    In this paper, we pinpoint the flaw in that procedure and present a new tree-shaped tableau that we prove to be \textbf{sound} and \textbf{complete} for bounded discrete-time STL.  
    
    On top of this theoretical foundation, we introduce \stlsat{}, an open-source Rust tool that decides the satisfiability of STL formulas, synthesizes concrete witness signals, and extracts unsatisfiable cores with its tableau engine, allowing users to identify inconsistent subsets of requirements for more effective specification debugging.  \stlsat{} also implements enhanced first-order logic and satisfiability modulo theories encodings for STL, which allow it to act as a portfolio solver.

    We evaluate \stlsat{} on an extended benchmark suite (including STL and Mission-time Linear Temporal Logic formulas) that we release publicly.  Across the whole benchmark, the portfolio solver matches or outperforms state-of-the-art tools while preserving correctness.
\end{abstract}

\section{Introduction}
\label{sec:intro}

Signal Temporal Logic (STL)~\citep{MalerN04} is a well-established specification language for describing the temporal properties of real-valued signals in mission- and safety-critical cyber-physical systems (CPS).  STL underpins a broad spectrum of verification tasks, ranging from runtime monitoring~\citep{JaksicBGKNN15} and mutation testing \citep{BartocciMNY23} to falsification \citep{SankaranarayananF12}, parameter synthesis \citep{BartocciBNS15}, fault localization~\citep{BartocciFMN18} and debugging \citep{BartocciMMMN21}.
During the design of a CPS, engineers typically collect a large number of STL requirements.
In this context, the following two situations can be problematic:

\noindent \textbf{Inconsistency.} 
    A set of STL formulas $\Phi=\{\varphi_1,\dots,\varphi_n\}$ is \emph{consistent}
    iff there exists at least one signal $s$ such that 
    $s \models \bigwedge_{i=1}^{n}\varphi_i$.  Equivalently, the conjunction
    $\bigwedge_{i=1}^{n}\varphi_i$ must be \textbf{satisfiable}; if it is not,
    the requirements are mutually contradictory.

\noindent \textbf{Redundancy.}  
    A formula $\psi\in\Phi$ is \emph{redundant} w.r.t.\ the remaining
    requirements iff those requirements already entail $\psi$, i.e., 
    $\bigwedge_{\varphi\in\Phi\setminus\{\psi\}} \models_{\mathsf E} \psi$.
    By definition, this holds exactly when the formula
    $\bigl(\bigwedge_{\varphi\in\Phi\setminus\{\psi\}}\bigr)\land\neg\psi$ is
    \textbf{unsatisfiable}.

Moreover, manual analysis of large CPS requirement sets is error-prone and demanding in terms of time, expertise, and cognitive effort, especially when expressed in a formal language such as STL.
Thus, automated requirement engineering tools are highly desirable to support engineers in these tasks.
Luckily, both notions of consistency and redundancy can be reduced to a single STL-satisfiability query: for any pair of formulas $\varphi,\psi$, the implication $\varphi\implies\psi$ holds iff $\varphi\land\neg\psi$ is unsatisfiable, and equivalence follows from mutual implication.
Hence, in this paper we tackle the problem of checking STL satisfiability.

These reductions are especially valuable in \emph{specification mining}~\cite{BartocciMNN22}, where automatically extracted temporal-logic specifications may contain noise, redundancies, or inconsistencies.
Satisfiability checking can detect inconsistent subsets, eliminate redundancies, and cluster semantically equivalent specifications, yielding a more concise and traceable requirement model.



\smallskip
\noindent \textbf{Example.}
Consider an autonomous crop-irrigation controller whose water-flow signal $f$
is measured along the irrigation line.
The following three STL requirements have been elicited:
\begin{align*}
\varphi_1 &:= \glob{[0,100]}\,\even{[10,40]} (f \ge 50) \\
\varphi_2 &:= \glob{[0,120]}\bigl( f \ge 50 \;\Rightarrow\;
               \even{[0,10]}\,\glob{[0,30]} (f < 5) \bigr) \\
\varphi_3 &:= \glob{[0,120]} (f < 100)
\end{align*}

\begin{itemize}[nosep]
  \item $\varphi_1$ requires that, at every time in \([0,100]\), the flow reaches at least \(50\) l/min within the following 10–40 time units.
  \item $\varphi_2$ states that whenever the flow reaches this irrigation value, it must be reduced below $5$ l/min within the next $10$ time units and stay low for another $30$ time units,
 in order to avoid over-watering.
  \item $\varphi_3$ is a safety bound that forbids flows equal to or above
        $100$ l/min at any time in the interval 0-120 units, preventing flooding of the field.
\end{itemize}

The three formulas together form the specification  
$\Phi_w=\{\varphi_1,\varphi_2,\varphi_3\}$.
Before $\Phi_w$ can be used for model checking or falsification, we must verify that it is \textbf{consistent}, i.e.\ that the
conjunction $\bigwedge_{i=1}^{3}\varphi_i$ is \emph{satisfiable}.  
Intuitively, $\varphi_2$ forces a drop from $f\ge 50$ to $f<5$,
which might conflict with $\varphi_1$;
however, turning this intuition into a rigorous proof requires exploring all possible signal behaviors over $[0,140]$.
Manual reasoning is error-prone, whereas automated STL satisfiability tools decide consistency in seconds; if the set is satisfiable they may synthesize a concrete witness signal, and if it is unsatisfiable they may extract an unsatisfiable core that pinpoints the conflicting requirements, enabling their immediate refinement.

Although STL has become the \emph{de facto} standard specification language for CPS requirements, it is still lacking mature, high-performance, and dedicated satisfiability solvers.
Existing approaches either target different logics---e.g., MLTLSAT \citep{mltlsat}---or exist only as research prototypes such as the tableau implementation described in \citet{MelaniBC25}. 
Moreover, these tools do not integrate smoothly into verification pipelines, which limits their usefulness for practical requirement-engineering workflows that require automated debugging and traceability.


The most prominent tableau-based solver for bounded discrete-time STL is \stltree{}~\citep{MelaniBC25}, also co-developed by two of the authors of this paper.
Its construction accelerates reasoning by ``jumping'' over intervals, but a careful analysis reveals a soundness flaw: in formulas with nested temporal operators, the old jump rules may skip instants where conflicts would be introduced, causing the tableau to miss rejecting branches.
Consider the following unsatisfiable formula:
\[
\varphi_S :=
\glob{[0,5]} \bigl(p \lor \glob{[2,2]}p\bigr) \land \glob{[3,5]}q,
\]
where $p$ and $q$ are inconsistent atomic predicates, for example $p = x \geq 0$ and $q = x < 0$.
At time 3, the second conjunct requires $q$; hence, the first conjunct cannot be satisfied by choosing $p$ and instead generates the obligation $\glob{[5,5]}p$.
The second conjunct also requires $q$ at time 5, making $\varphi_S$ unsatisfiable.
However, \stltree{} jumps from time 3 to time 5 and incorrectly shifts the obligation from $\glob{[5,5]}p$ to $\glob{[7,7]}p$.
The resulting tableau avoids the conflict between $p$ and $q$ at time 5 and accepts $\varphi_S$ with a signal that is not actually a model of $\varphi_S$.

This motivates our first contribution: a corrected tableau algorithm with formally proved soundness and completeness. Our approach inspects proposition-appearance intervals to guarantee that no conflicts can be introduced in skipped time steps.

Building on this foundation we introduce \stlsat{}, an open-source Rust tool implementing three complementary decision procedures that can be executed in parallel to speed up the solving process: our correct tableau, an optimized first-order-logic encoding, and a qualitative-semantics SMT encoding.
The tableau supports unsatisfiable-core extraction for debugging, and can synthesize witness signals for satisfiable formulas by extracting them from the tableau.
To foster reproducibility, we also release an extensive STL benchmark suite~\cite{zenodo}.


\stlsat{} determines in under one second that the~$\varphi_1$--$\varphi_3$ irrigation requirements are mutually inconsistent and the tableau engine also returns the subset of constraints $\set{\varphi_1, \varphi_2}$ responsible for the conflict as an \emph{unsatisfiable core}, thus providing immediate debugging support. 

We conduct an extensive empirical evaluation on the released benchmark set of STL formulas, as well as its close relative MLTL coming from related works, comparing each \stlsat{} engine against state-of-the-art STL solvers and against a dedicated MLTL satisfiability checker~\citep{mltlsat}.
The results show that no single technique dominates across all instances; however, the portfolio approach combining the three \stlsat{} solving engines achieves the best overall performance in terms of runtime and scalability.

\noindent In summary, our contributions are:
\begin{enumerate}[label=\arabic*.,  nosep]
    \item a sound and complete tree-shaped tableau algorithm with a novel set of rules;
    \item \stlsat{}, a high-performance Rust portfolio solver with tableau, FOL, and SMT engines;
    \item a novel unsat-core extraction method for the tableau engine;
    \item an open benchmark suite covering STL formulas; and
    \item a thorough experimental study demonstrating superior performance over existing tools.
\end{enumerate}

The remainder of the paper is organized as follows.
Section~\ref{sec:related-work} surveys related work on temporal logic satisfiability.
Section~\ref{sec:background} recalls the syntax and semantics of STL and MLTL.
Section~\ref{sec:tableau} details the flaws in the existing tableau \cite{MelaniBC25} and presents our corrected tableau algorithm together with its soundness and completeness proofs.
Section~\ref{sec:tool} details the implementation of \stlsat{}'s three solving engines.
Section~\ref{sec:experiments} describes our experimental methodology and reports the results.
Finally, Section~\ref{sec:conclusions} draws conclusions and outlines future research directions.

\section{Related Work}
\label{sec:related-work}

Originally proposed for CPS monitoring and testing \citep{MalerN04}, STL has primarily been studied for runtime verification \citep{BartocciDDFMNS18}, while consistency checking, redundancy removal, and implication analysis remain less explored.

\noindent \textbf{Validity, vacuity, and redundancy.}
These issues were firstly addressed by \citet{Dokhanchi}, who reduced validity, vacuity, and redundancy questions for STL to satisfiability problems in Metric Interval Temporal Logic (MITL) \citep{AlurFH96}. Their approach leverages existing MITL solvers \citep{BersaniRP13} to obtain counterexamples or proofs of equivalence.  While effective, this pipeline incurs the overhead of translating STL into a different logic and does not directly support the extraction of minimal conflicting subsets.

\noindent \textbf{Debugging tools.}
In the same spirit, \citet{RoehmHM17} proposed \textsc{STLInspector}, a debugging aid that generates concrete witness signals by mutating candidate models and querying an SMT encoding of the STL formula.  The tool is useful for pinpointing why a specification holds or fails on a given trace, but it does not provide a systematic analysis of the logical relationships among multiple requirements.

\noindent \textbf{Satisfiability/Model checking.}
The close relationship between satisfiability and bounded model checking has motivated several encodings of STL into first-order logic (FOL) or Satisfiability Modulo Theories (SMT).  \citet{BaeLee19} presented a refutationally complete SMT encoding for bounded model checking of continuous-time STL against hybrid system models; an improved variant was later shown to be suitable for satisfiability checking \citep{LeeYB21}.  For discrete-time settings, \citet{RamanDMMSS14} introduced an SMT formulation aimed at model predictive control that can also serve as a decision procedure for STL satisfiability \citep{MelaniBC25}.

\noindent \textbf{Tableau-based approaches.}
The only dedicated tableau method for bounded discrete-time STL prior to our work is the one-pass tree-shaped construction proposed by \citet{MelaniBC25}.
They proved that STL satisfiability is \textsc{EXPSPACE}-complete and implemented their algorithm, inspired by tableau techniques for other temporal logics \citep{Reynolds16,GeattiGMR21}, in a prototype Python tool.
In the course of our investigation we discovered a soundness flaw in their original jump optimization; we therefore revisited the mathematical definition, and proved both soundness and completeness for the corrected rule.

\noindent \textbf{Related temporal logics.}
Several works have studied satisfiability for logics closely related to STL.  \citet{BersaniRP15,BersaniRP16} offered SMT encodings for MITL and Quantified Temporal Logic (QTL) \citep{HirshfeldR05}, but these frameworks admit only Boolean signals, limiting their applicability to the quantitative nature of STL.  Mission-time Linear Temporal Logic (MLTL) \citep{ReinbacherRS14} shares many syntactic constructs with STL; a systematic comparison of encodings for MLTL—including translations to LTL on finite traces, SMV specifications, and FOL—was carried out by \citet{LiVR22}, who identified the FOL encoding as the best performing.  Our tool incorporates an improved version of this FOL translation and demonstrates better empirical performance on both STL and MLTL benchmarks.

\noindent \textbf{Unsatisfiable-core extraction.}
The extraction of minimal unsatisfiable cores has been extensively investigated for propositional SAT and SMT solving \citep{CimattiGS07}.  Within temporal logics, similar techniques have been explored mainly for Linear Temporal Logic (LTL) \citep{Schuppan16,GeattiGMV24}, but to the best of our knowledge no existing framework provides unsat-core extraction for STL formulas.  By integrating a dedicated core extraction mechanism into our tableau engine we fill this gap and enable direct debugging of inconsistent requirement sets.

\section{Background}
\label{sec:background}

\paragraph{Signal Temporal Logic (STL).}
We define bounded discrete-time STL~\cite{MalerN04} on the temporal domain of natural numbers $\mathbb{N}$.
A \emph{signal} is a function $w : \mathbb{N} \rightarrow \mathbb{R}^n$
that associates time instants with valuations of a finite set of signal variables $S = \{ x_1, \dots, x_n \}$.
We define the projection $w_R$ of a signal $w$ on a set $R \subseteq S$ in the usual way.
STL formulas have the following syntax:
\[
\varphi := \top \mid f(R) \bowtie k \mid \neg \varphi \mid \varphi_1 \lor \varphi_2
  \mid \until{I}{\varphi_1}{\varphi_2}
\]
where $\mathord{\bowtie} \in \{\mathord{<}, \mathord{=}, \mathord{>}\}$,
$R \subseteq S$ is a set of signal variables,
$f : \mathbb{R}^{|R|} \rightarrow \mathbb{R}$ is a linear function of signal values,
$k \in \mathbb{Q}$ is a rational constant,
and $I = [a,b]$ with $a, b \in \mathbb{N}$ and $a \leq b$.

We define the semantics on a signal $w$ and a time instant $t \in \mathbb{N}$ as follows:
\begin{align*}
    &(w,t) \models \top && \text{(always true)} \\
    &(w,t) \models f(R) \bowtie k     & \text{iff } & f(w_R(t)) \bowtie k, \text{ with } \mathord{\bowtie} \in \{\mathord{<}, \mathord{=}, \mathord{>}\} \\
    &(w,t) \models \neg \varphi & \text{iff } & (w,t) \not\models  \varphi \\
    &(w,t) \models \varphi_1 \lor \varphi_2 & \text{iff } & (w,t) \models \varphi_1 \text{ or } (w,t) \models \varphi_2 \\
    &(w,t) \models \until{[a,b]}{\varphi_1}{\varphi_2}  & \text{iff } &
    \exists t' \in [t+a, t+b] : (w, t') \models \varphi_2
    \text{ and } \forall t'' \in [t, t'] : (w,t'') \models \varphi_1
\end{align*}
Boolean signals can be encoded by defining a real variable $x_p$
for each \emph{atomic proposition} $p$, and asserting $p$ as $x_p = 1$ and $\neg p$ as $x_p \neq 1$.

We also use propositional operators ($\land$, $\implies$),
and temporal operators \emph{eventually} (or \emph{finally}) $\even{I}$ and \emph{always} (or \emph{globally}) $\glob{I}$ as shortcuts for $\even{I} \varphi \equiv \until{I}{\top}{\varphi}$ and $\glob{I} \varphi \equiv \neg \even{I} \neg \varphi$.
We also employ the \emph{release} operator $\releaseop$, defined as the dual of \emph{until} under negation, i.e., $\release{I}{\varphi_1}{\varphi_2} \equiv \neg (\until{I}{\neg \varphi_1}{\neg \varphi_2})$.
Given $\until{I}{\varphi_1}{\varphi_2}$ and $\release{I}{\varphi_2}{\varphi_1}$, we refer to $\varphi_1$ as the \emph{invariant}, and to $\varphi_2$ as the \emph{target}.

\paragraph{Mission-Time Linear Temporal Logic (MLTL).}
MLTL~\cite{ReinbacherRS14} can be seen as
a fragment of bounded discrete-time STL that only allows for Boolean signals.
It also differs from STL in the semantics of the until operator, which we denote as strict until ($\suntilop{}$) for MLTL:
\[
(w,t) \models \suntil{[a,b]}{\varphi_1}{\varphi_2}  \text{ iff }
    \exists t' \in [t+a, t+b] : (w, t') \models \varphi_2
    \text{ and } \forall t'' \in [t+a, t'-1] : (w,t'') \models \varphi_1
\]
The strict release operator is defined as
$\srelease{I}{\varphi_1}{\varphi_2} \equiv \neg (\suntil{I}{\neg\varphi_1}{\neg\varphi_2})$.

Indeed, MLTL's until operator is expressible in terms of STL's, and \emph{vice versa}:
\begin{align*}
    \suntil{[a,b]}{\varphi_1}{\varphi_2} &\equiv
    \even{[a,a]} \big(\until{[0, b-a]}{(\varphi_1 \lor \varphi_2)}{\varphi_2}\big) &
    \srelease{[a,b]}{\varphi_1}{\varphi_2} &\equiv
    \even{[a,a]} \big(\release{[0, b-a]}{(\varphi_1 \land \varphi_2)}{\varphi_2}\big) \\
    \until{[a,b]}{\varphi_1}{\varphi_2} &\equiv \glob{[0,a]} \varphi_1 \land \suntil{[a,b]}{\varphi_1}{(\varphi_1 \land \varphi_2)} &
    \release{[a,b]}{\varphi_1}{\varphi_2} &\equiv \even{[0,a]} \varphi_1 \lor \suntil{[a,b]}{\varphi_2}{\varphi_1} \lor \glob{[a,b]} \varphi_2
\end{align*}
The equivalences in the second row, together with common propositional and temporal logic equivalences, allow for rewriting all STL formulas in \emph{strict normal form}, i.e.,
by using only strict operators in positive form, with negation only applied to atomic propositions.

\section{Tree-Shaped Tableau}
\label{sec:tableau}

Building upon the tree-shaped tableau introduced in~\cite{MelaniBC25}, we propose an alternative definition of the \textsf{JUMP} rule, an optimization that skips sequences of redundant time steps during tableau exploration while fixing the soundness and completeness problems of the original rule.

We begin by reviewing the original \emph{basic} tableau, i.e., without the \textsf{JUMP} rule, whose soundness and completeness results still hold.
We then introduce the new \textsf{JUMP} rule and, finally, present its correctness theorems.

\subsection{Basic Tableau}

A tableau is a tree in which each node $u$ is labeled with a set of formula occurrences $\Gamma(u)$ required to hold conjunctively, and a time counter $t(u) \in \mathbb{N}$. 
The root $u_0$ is labeled with the formula $\phi$ to be checked, assumed to be in strict normal form.
Besides a formula, elements of $\Gamma(u)$ may carry auxiliary metadata recording the temporal formula occurrence that generated them---i.e., their \emph{parent}.
We write $\parof{\varphi}{\psi}$ for the occurrence whose underlying formula is $\varphi$ and whose parent occurrence is $\psi$, and $\parof{\varphi}{\bot}$ when no parent is present (e.g., for operators in the input formula).
For instance, in $\parof{a}{\glob{I}a}$, the occurrence $\glob{I}a$ is the parent of $a$, meaning that unfolding it generated that occurrence of $a$.
Parent annotations are only needed by the \textsf{JUMP} rule, hence we omit them whenever not strictly relevant, and we identify occurrences in $\Gamma(u)$ by their underlying formula.

\begin{table}
    \centering
    \caption{
    Expansion rules.
        A node $u$ with $\phi \in \Gamma(u)$ that satisfies the condition is expanded into one child $u_p$ or two children $u_s$ and $u_p$ such that $t(u_s) = t(u_p) = t(u)$, $\Gamma(u_s) = (\Gamma(u) \setminus \{\phi\}) \cup \Gamma_\phi(u_s)$ and $\Gamma(u_p) = (\Gamma(u) \setminus \{\phi\}) \cup \Gamma_\phi(u_p)$. 
    }
    \label{tab:expansion-rules}
    \footnotesize
    \begin{tabular}{c l l l l}
        \toprule
        Cat. & $\phi \in \Gamma(u)$ & condition & $\Gamma_\phi(u_s)$ & $\Gamma_\phi(u_p)$ \\
        
        \midrule
    
        \multirow{2}{*}{Prop.}
            & $\parof{\varphi_1 \lor \varphi_2}{\psi}$
            & 
            & $\parof{\varphi_1}{\psi}$
            & $\parof{\varphi_2}{\psi}$ \\
        
            & $\parof{\varphi_1 \land \varphi_2}{\psi}$
            &
            & $\parof{\varphi_1}{\psi},\ \parof{\varphi_2}{\psi}$
            & \\
    
        \midrule

        \multirow{8}{*}{\rotatebox[origin=c]{90}{Temporal}}
            & \multirow{2}{*}{$\parof{\even{[a,b]}\varphi}{\psi}$}
            & $a \le t(u) < b$
            & $\parof{\exp^{t(u)}(\varphi)}{\bot}$
            & $\parof{\meven{[a,b]}\varphi}{\psi}$ \\
            
            &
            & $t(u)=b$
            & $\parof{\exp^{t(u)}(\varphi)}{\bot}$
            & \\
    
        \cmidrule(lr){2-5}

            & \multirow{2}{*}{$\parof{\glob{[a,b]}\varphi}{\psi}$}
            & $a \le t(u) < b$
            & 
            & $\parof{\mglob{[a,b]}\varphi}{\psi},\ \parof{\exp^{t(u)}(\varphi)}{\phi}$ \\
        
            &
            & $t(u)=b$
            & 
            & $\parof{\exp^{t(u)}(\varphi)}{\phi}$ \\
    
        \cmidrule(lr){2-5}
    
            & \multirow{2}{*}{$\parof{\suntil{[a,b]}{\varphi_1}{\varphi_2}}{\psi}$}
            & $a \le t(u) < b$
            & $\parof{\exp^{t(u)}(\varphi_2)}{\bot}$
            & $\parof{\msuntil{[a,b]}{\varphi_1}{\varphi_2}}{\psi},\ \parof{\exp^{t(u)}(\varphi_1)}{\phi}$ \\
        
            &
            & $t(u)=b$
            & $\parof{\exp^{t(u)}(\varphi_2)}{\bot}$
            & \\
    
        \cmidrule(lr){2-5}
    
            & \multirow{2}{*}{$\parof{\srelease{[a,b]}{\varphi_1}{\varphi_2}}{\psi}$}
            & $a \le t(u) < b$
            & $\parof{\exp^{t(u)}(\varphi_1)}{\bot},\ \parof{\exp^{t(u)}(\varphi_2)}{\bot}$
            & $\parof{\msrelease{[a,b]}{\varphi_1}{\varphi_2}}{\psi},\ \parof{\exp^{t(u)}(\varphi_2)}{\phi}$ \\
        
            &
            & $t(u)=b$
            & 
            & $\parof{\exp^{t(u)}(\varphi_2)}{\phi}$ \\
    
        \bottomrule
    \end{tabular}
\end{table}

The tableau is built incrementally by applying the \emph{expansion} rules in Table~\ref{tab:expansion-rules}, which decompose formulas in a node's label into simpler ones according to their semantics generating child nodes.
Decomposing temporal operators introduces marked operators denoted by a bar (e.g., $\meven{}$), which represent operators whose satisfaction is postponed to the next time step. Furthermore, temporal operator decomposition employs a temporal expansion function $\exp^t$ mapping formulas inside nested temporal operators to their realizations when unfolded at time $t$.

which updates intervals when nested temporal operators are unfolded:
\begin{gather*}
    \begin{alignedat}{3}
        &\exp^t(\top) = \top &\qquad
        &\exp^t(f(R) \bowtie k) = f(R) \bowtie k &\qquad
        &\exp^t(\neg \varphi_1) = \neg \exp^t(\varphi_1)
    \end{alignedat} \\
    \begin{aligned}
        &\exp^t(\varphi_1 \circ \varphi_2) = \exp^t(\varphi_1) \circ \exp^t(\varphi_2) &\qquad& \text{with } \circ \in \set{\land, \lor} \\
        &\exp^t(\anyunop{[a,b]}{\varphi}) = \anyunop{[a+t,b+t]}{\varphi} &\qquad& \text{with } \mathsf{A} \in \set{\evenop, \globop} \\
        &\exp^t(\anybinop{[a,b]}{\varphi_1}{\varphi_2}) = \anybinop{[a+t,b+t]}{\varphi_1}{\varphi_2} &\qquad& \text{with } \mathsf{B} \in \set{\suntilop, \sreleaseop}.
    \end{aligned}
\end{gather*}

During tableau construction, nodes are checked for local inconsistencies. Formally, a node $u$ is rejected (and closed) if either $\neg \top \in \Gamma(u)$ or the set of propositions $\set{f(R) \bowtie k \in \Gamma(u)} \cup \set{\lnot(f(R) \bowtie k) \in \Gamma(u)}$ is inconsistent.
Rejected nodes are leaves.

When no further decomposition rule is applicable to a node, i.e., all temporal operators in its label are marked or not active, the node is said to be \emph{poised}.
In this case, the tableau advances time through the \textsf{STEP} rule:
\begin{description}
    \item[\textsf{STEP}]
    Let $u$ be a poised node. If $\Gamma(u)$ contains any temporal operators, then $u$ has one child $u'$ with $t(u')=t(u)+1$, and
    \begin{align*}
    \Gamma(u') =
    \ & \set{\anyunop{I}{\varphi} \in \Gamma(u) \mid \mathsf{A} \in \set{\evenop, \globop}}
    \cup \set{\anybinop{I}{\varphi_1}{\varphi_2} \in \Gamma(u) \mid \mathsf{B} \in \set{\suntilop, \sreleaseop}} \\
    &\cup \set{\anyunop{[a,b]}{\varphi} \mid \stlunop{\marked{\mathsf{A}}}{[a,b]}{\varphi} \in \Gamma(u) \land \mathsf{A} \in \set{\evenop, \globop} \land t(u) < b} \\
    &\cup
    \set{\anybinop{[a,b]}{\varphi_1}{\varphi_2} \mid \stlbinop{\marked{\mathsf{B}}}{[a,b]}{\varphi_1}{\varphi_2} \in \Gamma(u) \land \mathsf{B} \in \set{\suntilop, \sreleaseop} \land t(u) < b}
    \end{align*}
\end{description}

If $u$ is poised, not rejected, and its \textsf{STEP} successor label is empty,
then the corresponding branch is accepted and witnesses the satisfiability
of the input formula.
If the tableau frontier consists of only rejected nodes, the input formula is unsatisfiable.

Fig.~\ref{fig:tableau-example-basic} shows an example of the basic tableau for formula $\glob{[0,4]}a \land \suntil{[4,7]}{b}{\neg a}$.
The tableau starts with the expansion rule for the $\land$ operator, and proceeds with the one for the $\globop$ operator, which creates a marked instance of $\glob{[0,4]}a$ and extracts its argument $a$.
The interval of the $\suntilop$ operator starts at 4, so no more expansion rules can be applied and $u_2$ is \emph{poised}, and the \textsf{STEP} rule advances time by creating node $u_3$.
This process goes on until time 4, creating many nodes identical to $u_1$ and $u_2$, which we omit from the figure.
At time 4, the $\suntilop$ operator is expanded, creating two nodes $u_s = u_6$ and $u_p = u_7$.
In $u_6$, we try to \emph{satisfy} it by extracting $\neg a$, which conflicts with $a$ extracted by the $\globop$ operator and leads to a rejected node.
In $u_7$, we \emph{postpone} the satisfaction of the $\suntilop$ operator by marking it and extracting $b$, which does not conflict with any other proposition.
Since $\glob{[0,4]}a$ ends at time 4, the step rule does not propagate it to $u_8$.
Thus, the $\suntilop$ can finally be satisfied in $u_9$.
Since $u_9$ has no temporal operators and no conflicts between atomic propositions, it is accepted, witnessing the satisfiability of the input formula.

Soundness and completeness of the basic tableau were established in~\cite{MelaniBC25}:
\begin{theorem}[{\cite[Theorem 4.4 and Lemma 3]{MelaniBC25}}]
    A formula $\varphi$ is satisfiable iff the basic tableau rooted in it has an accepting branch.
\end{theorem}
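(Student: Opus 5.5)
The proof has two directions: soundness (an accepting branch yields a model) and completeness (a model yields an accepting branch). The key tool for both is an invariant linking node labels to the semantics. Every formula occurrence in a label at time $t(u)$ is read with \emph{absolute} intervals, because $\exp^t$ shifts nested intervals by the unfolding time. We therefore say a signal $w$ \emph{satisfies} a node $u$ if $(w,0)\models\varphi$ for every unmarked $\varphi\in\Gamma(u)$ interpreted with absolute time. For a marked occurrence such as $\meven{[a,b]}\varphi$, we instead require the residual obligation over $[t(u)+1,b]$. For atoms, satisfaction means they hold at instant $t(u)$. The first step is to check that this reading is consistent with the root, which is labelled with the input formula $\phi$ at time $0$.

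\textbf{Completeness.} Fix a model $w$ of $\phi$ and build a branch top-down, maintaining the invariant that $w$ satisfies the current node.
\begin{itemize}
\item \emph{Propositional rules.} At a disjunction we choose a disjunct that $w$ satisfies.
\item \emph{Temporal rules at time $t(u)$.} The choice between the satisfy child $u_s$ and the postpone child $u_p$ is dictated by the semantics. For $\even{}$ and $\suntilop$ we pick $u_s$ exactly when the target holds at $t(u)$, and otherwise the postponed marked version. For $\sreleaseop$ we pick $u_s$ when both arguments hold at $t(u)$. In each case we show, using the semantic clauses for strict until and release, that the chosen child still satisfies the invariant.
\item \emph{\textsf{STEP}.} The rule preserves the invariant because marked operators become their unmarked residues at $t(u)+1$.
\end{itemize}
Such a branch never contains an inconsistent set of atoms, since all of them hold in $w$ at the same instant. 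It also never reaches a node with $t(u)=b$ carrying an unsatisfied obligation, because the semantics forces satisfaction by time $b$. Finally, the branch terminates: expansion rules strictly decrease formula size modulo marking, and time is bounded by the maximal interval endpoint of $\phi$ after shifting, which is finite. Hence the branch ends in an accepting poised node whose successor label is empty.

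\textbf{Soundness.} Given an accepting branch, define $w$ at each time $t$ by picking a real valuation satisfying the consistent set of atoms collected at the poised node of time $t$ on the branch; nonempty consistency is guaranteed by the rejection check. At time instants after the branch ends we pick arbitrary values. We then prove, by induction on formula structure and backward along the branch, that every occurrence in every node label holds in $w$.
\begin{itemize}
\item \emph{Atoms} hold by construction.
\item \emph{$\globop$ and $\sreleaseop$.} The argument is extracted at each $t\in[a,b]$ until the operator expires, since \textsf{STEP} only drops it once $t(u)=b$.
\item \emph{$\evenop$ and $\suntilop$.} The key is that an unmarked occurrence active in $[a,b]$ cannot be propagated past $b$: at $t(u)=b$ the only rule has the satisfy branch alone. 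The operator is therefore discharged at some $t'\le b$, with the invariant extracted at all earlier steps.
\item \emph{Operators with $t(u)<a$} are carried unchanged by \textsf{STEP} until active, which matches the absolute-time reading.
\end{itemize}

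The main obstacle is the bookkeeping around absolute time, specifically showing that $\exp^{t}$ correctly relates the relative semantics of nested operators to the label's absolute intervals. I will isolate this as a lemma first: $(w,t)\models\varphi$ iff the occurrence $\exp^t(\varphi)$ holds in the absolute reading. I expect to prove it by induction on $\varphi$. The rest is then routine case analysis over Table~\ref{tab:expansion-rules}.
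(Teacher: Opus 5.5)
The paper does not prove this statement itself; it imports it from \citet{MelaniBC25}. Your plan is the standard argument and is right in outline: a model-guided branch for completeness, a signal read off the atoms of the poised nodes for soundness, and the $\exp^t$ shifting lemma isolated first. However, the node-satisfaction relation you start from does not support the local invariant you then use. You read every unmarked temporal operator in $\Gamma(u)$ as $(w,0)\models\varphi$, i.e.\ over its whole absolute interval $[a,b]$, and marked ones as residuals over $[t(u)+1,b]$. With that definition the \textsf{STEP} case fails. The residual of $\mglob{[a,b]}\varphi$, $\msuntil{[a,b]}{\varphi_1}{\varphi_2}$ or $\msrelease{[a,b]}{\varphi_1}{\varphi_2}$ over $[t(u)+1,b]$ does not imply the unmarked formula over $[a,b]$. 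It says nothing about the instants in $[a,t(u)-1]$, which are recorded only in ancestors of $u$, not in $\Gamma(u)$. The same problem affects your choice rule. At a time $t(u)>a$, knowing that the full $\suntilop$ holds and that its target fails at $t(u)$ does not give you $\varphi_1$ at $t(u)$, because the witness might lie before $t(u)$. Ruling that out needs the branch history, not the current label.

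The repair is to make the reading time-relative and recursive. At node time $t=t(u)$:
\begin{itemize}
\item atoms are evaluated at instant $t$;
\item $\land$ and $\lor$ are read componentwise;
\item an unmarked operator with interval $[a,b]$ is read by its time-$0$ semantic clause with every occurrence of the lower bound $a$ replaced by $\max(a,t)$;
\item a marked operator is read the same way with $a$ replaced by $t+1$.
\end{itemize}
This works for three reasons. At the root, for every freshly extracted $\exp^t(\cdot)$, and for every inactive operator carried by \textsf{STEP}, one has $a\ge t$, so the reading coincides with the absolute one and your shifting lemma applies. \textsf{STEP} maps the marked reading at $t$ verbatim to the unmarked one at $t+1$. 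And if the target of an active $\suntilop$ fails at $t$, the witness lies in $[t+1,b]$, so $\varphi_1$ holds at $t$ and the marked residual holds as well.

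Your phrase ``unmarked residues'' suggests this is what you meant, but it has to be the definition. It is also what makes your soundness induction (on formula size, then on position from the end of the branch) genuinely local. Finally, in the soundness direction do not lump $\sreleaseop$ with $\globop$. The $u_s$ child can discharge a release early, at some $\tau<b$, by extracting both operands. You need the easy check that $\varphi_2$ extracted at every instant of the operator's active lifetime up to $\tau$, together with $\varphi_1$ at $\tau$, satisfies the release semantics.
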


\begin{figure}
    \begin{minipage}{0.38\textwidth}
        \footnotesize
        \begin{forest}
  for tree={
    myleaf/.style={label=below:{#1}},
    l sep=4mm,
    s sep=4mm
  },
  [{$u_0:\ \glob{[0,4]}a \land \suntil{[4,7]}{b}{\neg a}\mid \mathbf{0}$}
    [{$u_1:\ \glob{[0,4]}a,\ \suntil{[4,7]}{b}{\neg a}\mid \mathbf{0}$},
     edge label={node[midway,left,font=\scriptsize]{$\land$}}
        [{$u_2:\ \mglob{[0,4]}a,\ a,\ \suntil{[4,7]}{b}{\neg a}\mid \mathbf{0}$},
        edge label={node[midway,left,font=\scriptsize\sffamily]{G}}
        [{$u_3:\ \glob{[0,4]}a,\ \suntil{[4,7]}{b}{\neg a}\mid \mathbf{1}$},
            edge label={node[midway,left,font=\scriptsize\sffamily]{STEP}}
            [{$u_4:\ \glob{[0,4]}a,\ \suntil{[4,7]}{b}{\neg a}\mid \mathbf{4}$},
            edge label={node[midway,font=\scriptsize\sffamily]{\dots}},
            no edge
                [{$u_5:\ a,\ \suntil{[4,7]}{b}{\neg a}\mid \mathbf{4}$},
                edge label={node[midway,left,font=\scriptsize\sffamily]{G}}
                [{$u_6:\ a,\ \neg a\mid \mathbf{4}$}, myleaf={\xmark}]
                [{$u_7:\ a,\ \msuntil{[4,7]}{b}{\neg a},\ b\mid \mathbf{4}$},
                    edge label={node[midway,left,xshift=-18pt,font=\scriptsize\sffamily]{$\suntilop$}}
                    [{$u_8:\ \suntil{[4,7]}{b}{\neg a}\mid \mathbf{5}$},
                    edge label={node[midway,left,font=\scriptsize\sffamily]{STEP}}
                        [{$u_9:\ \neg a\mid \mathbf{5}$}, label=left:{\cmark}]
                        [{$u_{10}:\ \msuntil{[4,7]}{b}{\neg a},\ b\mid \mathbf{5}$},
                        edge label={node[midway,left,xshift=-18pt,font=\scriptsize\sffamily]{$\suntilop$}}
                        ]
                    ]
                ]
                ]
            ]
        ]
        ]
    ]
  ]
\end{forest}
    \end{minipage}%
    \begin{minipage}{0.38\textwidth}
        \footnotesize
        \begin{forest}
  for tree={
    myleaf/.style={label=below:{\strut#1}},
    l sep=5mm,
    s sep=5mm
  },
  [{$u_0:\ \glob{[0,5]}\bigl(p\lor\glob{[2,2]}p\bigr)\land\glob{[3,5]}q\mid\mathbf{0}$}
        [{$u_1:\ \mglob{[0,5]}\bigl(p\lor\glob{[2,2]}p\bigr),\ p\lor\glob[{[0,5]}]{[5,5]}p,\ \glob{[3,5]}q\mid\mathbf{3}$},
         edge label={node[midway,font=\scriptsize\sffamily]{\dots}},
         no edge
          [{$u_2:\ \mglob{[0,5]}\bigl(p\lor\glob{[2,2]}p\bigr),\ p\lor\glob[{[0,5]}]{[5,5]}p,\ \mglob{[3,5]}q,\ q\mid\mathbf{3}$},
            edge label={node[midway,left,font=\scriptsize\sffamily]{G, G}}
            [{$u_3:\ \dots,\ p,\ q\mid\mathbf{3}$},
              edge label={node[midway,left,xshift=22pt,font=\scriptsize\sffamily]{$\lor$}},
              myleaf={\xmark}
            ]
            [{$u_4:\ \mglob{[0,5]}\bigl(p\lor\glob{[2,2]}p\bigr),\ \mglob{[3,5]}q,\ \glob[{[0,5]}]{[5,5]}p,\ q\mid\mathbf{3}$},
              edge label={node[midway,right,xshift=-22pt,font=\scriptsize\sffamily]{$\lor$}}
              [{$u_5:\ \glob{[0,5]}\bigl(p\lor\glob{[2,2]}p\bigr),\ \glob{[3,5]}q,\ \glob[{[0,5]}]{[7,7]}p\mid\mathbf{5}$},
                edge label={node[midway,left,font=\scriptsize\sffamily]{JUMP}}
                [{$u_6:\ p\lor\glob[{[0,5]}]{[7,7]}p,\ q,\ \glob[{[0,5]}]{[7,7]}p\mid\mathbf{5}$},
                  edge label={node[midway,left,font=\scriptsize\sffamily]{G, G}}
                  [{$u_7:\ \dots,\ p,\ q\mid\mathbf{5}$},
                    edge label={node[midway,left,xshift=18pt,font=\scriptsize\sffamily]{$\lor$}},
                    myleaf={\xmark}
                  ]
                  [{$u_8:\ q,\ \glob[{[0,5]}]{[7,7]}p\mid\mathbf{5}$},
                    edge label={node[midway,right,xshift=-18pt,font=\scriptsize\sffamily]{$\lor$}}
                    [{$u_{9}:\ \glob[{[0,5]}]{[7,7]}p\mid\mathbf{7}$},
                      edge label={node[midway,left,font=\scriptsize\sffamily]{JUMP}}
                      [{$u_{10}:\ p\mid\mathbf{7}$}, label=left:{\cmark},
                        edge label={node[midway,left,font=\scriptsize\sffamily]{G}},
                      ]
                    ]
                  ]
                ]
              ]
            ]
          ]
        ]
  ]
\end{forest}
    \end{minipage}
    \begin{minipage}{0.5\textwidth}
        \captionof{figure}{Basic tableau for $\glob{[0,4]}a \land \suntil{[4,7]}{b}{\neg a}$.}
        \label{fig:tableau-example-basic}
    \end{minipage}%
    \begin{minipage}{0.5\textwidth}
        \captionof{figure}{\stltree{} tableau for $\varphi_S$.}
        \label{fig:tableau-example-unsound}
    \end{minipage}
\end{figure}

\subsection{The \texorpdfstring{\oldjump{}}{JUMPO} rule and its unsoundness}
\label{sec:unsound-jump}

As an optimization to the basic tableau, \citet{MelaniBC25} introduced the \oldjump{} rule to skip repetitive parts of the tableau by jumping directly to a later point in time.
For instance, for formula $\glob{[0,20]}{x > 0} \land \even{[0,5]}{x > 5}$, the tableau need not explore every time instant, as checking consistency only at interval bounds 0, 5, and 20 is sufficient to determine satisfiability.
Formulas with nested operators are trickier: the tableau by \cite{MelaniBC25} waits until the end of the time interval of the first extracted instance of the inner operator.
For instance, with formula $\glob{[5,20]}{\even{[0,5]} x > 0}$, the tableau proceeds with the \textsf{STEP} rule until time $5 + 5 = 10$, after which it may jump to time 20.

We found out, however, that this precaution is not enough to preserve the soundness of the tableau.
As we mentioned in \S\ref{sec:intro}, formula $\varphi_S$ witnesses this flaw.
Fig.~\ref{fig:tableau-example-unsound} shows part of the tableau rooted in $\varphi_S$ that uses the \oldjump{} rule.
The figure omits the prefix up to time 3, in which the first conjunct is satisfied by choosing $p$ and the \textsf{STEP} rule is applied.
At time 3, expanding the two outer $\globop$ operators produces $q$ and the disjunction $p \lor \glob[{[0,5]}]{[5,5]}p$
(here, the superscript $[0,5]$ indicates that the formula is extracted from an operator with interval $[0,5]$).
The branch that chooses $p$ is locally inconsistent with $q$ and is rejected.
The other branch chooses $\glob[{[0,5]}]{[5,5]}p$.

Since its current time is greater than $0+2=2$, the poised node $u_4$ satisfies the conditions of the \oldjump{} rule, which jumps from time 3 to time 5.
Since the parent of $\glob[{[0,5]}]{[5,5]}p$ is still active at time 3, the rule treats this formula as a derived obligation and shifts both interval bounds by the jump length 2.
Consequently, $u_6$ contains $\glob[{[0,5]}]{[7,7]}p$ instead of the required $\glob[{[0,5]}]{[5,5]}p$.
At time 5, the first conjunct can again select its nested $\globop$ disjunct, which yields the same obligation at time 7; after $q$ expires, the tableau satisfies $p$ at time 7 and accepts the branch.
The branch therefore never checks the inconsistent pair $p,q$ at time 5.


The \oldjump{} rule compromises both soundness and completeness of the tableau.
We describe this flaw in more detail in
\appendixreference{app:jump-flaws}{A}.

\subsection{A new, correct \textsf{JUMP} rule}
\label{sec:jump}

To speed up tableau construction, we introduce the \textsf{JUMP} rule, which can be used to compress a sequence of \textsf{STEP} rule applications.
Intuitively, a jump from a poised node $u$ to a node $u'$ has the same effect as postponing all active temporal operators until time $t(u')$ by always choosing the expansion rule $\Gamma_{\phi}(u_p)$.

To describe the \textsf{JUMP} rule, we introduce new notation, then discuss the jump-size computation, and finally describe the resulting node. 
To simplify notation, we omit the $\globop$ and $\evenop$ operators, which we replace by the standard equivalences $\glob{I}\varphi \equiv \srelease{I}{\neg\top}{\varphi}$ and $\even{I}\varphi \equiv \suntil{I}{\top}{\varphi}$.

The \textsf{JUMP} rule needs parent annotations, which are created by expansion rules.
When the \textsf{JUMP} rule is enabled, the \textsf{STEP} rule propagates parent annotations on all formulas that it inserts in the new node.
They are accessed through the \emph{parent-active} predicate $\pactive$, which we define over a formula occurrence and a node as follows:
\begin{align*}
    \pactive(\phi,u)
    \quad\text{iff}\quad
    \phi = \parof{\varphi}{\psi}
    \text{ for some } \varphi, \psi \in \Gamma(u)
    \text{ s.t. } \psi \neq \bot
    \text{ and }
    \psi \in \Gamma(u).
\end{align*}

\paragraph{Proposition Validity Intervals.}

We define the proposition validity intervals $\tinv(\varphi)$ of a formula $\varphi$ as the time intervals in which its atomic propositions may appear, allowing us to identify potential conflicts. Formally:
\begin{gather*}
    \begin{aligned}
        &\tinv(f(R) \bowtie k) = \tinv(\top) = \set{(0,0)} &\qquad
        &\tinv(\neg \varphi) = \tinv(\varphi) \\
        &\tinv(\varphi_1 \circ \varphi_2) = \tinv(\varphi_1) \cup \tinv(\varphi_2) &\qquad& \text{with } \circ \in \set{\land, \lor} \\
    \end{aligned} \\
    \begin{aligned}
        \tinv(\suntil{[a,b]}{\varphi_1}{\varphi_2})
        =
        \tinv(\srelease{[a,b]}{\varphi_1}{\varphi_2})
        &=
        \set{(l+a, r+b-1) \mid (l,r) \in \tinv(\varphi_1),\ a < b} \\
        &\quad\cup
        \set{(l+a, r+b) \mid (l,r) \in \tinv(\varphi_2)}
    \end{aligned}
\end{gather*}

To distinguish different occurrences of the same proposition, we associate a unique identifier to every occurrence.
For instance, in $\varphi = \suntil{[0,5]}{p}{q} \land \suntil{[0,10]}{p}{r}$, the two occurrences of $p$ are treated as distinct because they are governed by different temporal constraints.
Formally, the formula is viewed as a syntax tree whose leaves carry unique identifiers.
Two propositions are considered the same only if they share the same syntactic structure and identifier.
Identifiers are preserved by expansion rules and by the temporal expansion operator $\exp$.
Such identifiers are used only for validity intervals: for local consistency checking, propositions are compared solely based on their syntactic structure.

Function $\tinv$ is defined so that each occurrence of an atomic proposition $\rho$ in a formula contributes exactly one pair $(l,r)$. 
Hence, we may unambiguously refer to a pair $i = (l,r) \in \tinv(\varphi)$ deriving from $\rho$, meaning the specific occurrence identified via its identifier rather than merely its syntactic structure.
We denote as $\id{i}$ such an identifier.

The following lemma, whose proof is in
\appendixreference{app:proofs}{B}, relates validity intervals computed from a formula $\varphi$ to the time instants at which occurrences of a proposition produced by decompositions of $\varphi$ may appear in the tableau:
\begin{lemma}
    \label{lemma:te-limit}
    Let $u$ be a node of the basic tableau, $\varphi \in \Gamma(u)$ be a formula in the label of the node, and $\rho$ be an atomic proposition appearing in $\varphi$.
    Let $(l, r) \in \tinv(\varphi)$ be the proposition validity interval deriving from $\rho$.
    Then, in the subtree rooted at $u$, any occurrence of $\rho$ derived from the expansion of $\varphi$ can only appear in nodes $v$ such that $l \leq t(v) \leq r$ if $\rho$ is nested within a temporal operator in $\varphi$, and such that $t(v) = t(u)$ otherwise.
\end{lemma}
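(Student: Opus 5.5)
The proof is by structural induction on $\varphi$, with a strengthened statement about descendants in the subtree rooted at $u$. Throughout I use that intervals of temporal operators in node labels are \emph{absolute}: $\exp^t$ shifts the interval of a nested operator by the current time at unfolding. The intervals $\tinv$ computes for such a formula are therefore absolute time instants.

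\emph{Step 1: two invariants of the basic tableau.} I first isolate two facts.
\begin{enumerate}[label=(\roman*), nosep]
\item A non-temporal formula (atomic proposition, negation, $\land$, $\lor$) in $\Gamma(u)$ is never carried across a \textsf{STEP}. \textsf{STEP} copies only temporal operators, marked or unmarked, into the child. Hence such a formula and everything the propositional rules derive from it live only in nodes $v$ with $t(v)=t(u)$.
\item An occurrence of $\suntil{[a,b]}{\varphi_1}{\varphi_2}$ or $\srelease{[a,b]}{\varphi_1}{\varphi_2}$ in $\Gamma(u)$, together with its marked versions in descendants, is expanded only at nodes $v$ with $a\le t(v)\le b$. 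The expansion rules require $a\le t(v)\le b$. Marked versions are propagated by \textsf{STEP} only when $t<b$. Nothing else creates copies carrying the same identifiers.
\end{enumerate}
Reading off Table~\ref{tab:expansion-rules}, I then note that the target $\varphi_2$ is extracted as $\exp^{t}(\varphi_2)$ with $t\in[a,b]$. The invariant $\varphi_1$ is extracted only in the $t<b$ cases, so with $t\in[a,b-1]$, and never when $a=b$. In the release $t=b$ case only $\varphi_2$ is extracted. The $\globop$ and $\evenop$ operators are covered through their strict encodings.

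\emph{Step 2: the induction on $\varphi$.} The atomic case and the case of $\top$ follow from (i), and the claim $t(v)=t(u)$ holds trivially. For $\neg$, $\land$ and $\lor$, the expansion rules hand the subformulas down at the same node time, and $\tinv$ takes unions. The claim therefore follows by the induction hypothesis applied at the descendant node where the subformula appears. This is why the statement should be proved for arbitrary nodes $u$, not just the root.

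For $\varphi=\suntil{[a,b]}{\varphi_1}{\varphi_2}$ (release is symmetric), take $\rho$ occurring in $\varphi_2$ with pair $(l,r)\in\tinv(\varphi_2)$. By Step 1, $\rho$ stems from some $\exp^t(\varphi_2)$ placed at a node $w$ with $t(w)=t\in[a,b]$. I apply the induction hypothesis to $\exp^t(\varphi_2)$ at $w$, and need an auxiliary shift lemma for this. The lemma says that $\tinv(\exp^t(\psi))$, for $\psi$ viewed as a subformula under a temporal operator, equals $\tinv(\psi)$ with every temporal pair shifted by $t$ and identifiers preserved. If $\rho$ is nested in a temporal operator inside $\varphi_2$, it appears at times in $[l+t,r+t]\subseteq[l+a,r+b]$. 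Otherwise it appears exactly at time $t\in[a,b]$, and since $\tinv$ gives $(0,0)$ for atoms, we have $[a,b]=[l+a,r+b]$. The same argument for $\varphi_1$ with $t\in[a,b-1]$ yields $[l+a,r+b-1]$, matching the definition of $\tinv$, including the condition $a<b$.

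\emph{Main obstacle.} The delicate point is the bookkeeping of occurrences versus syntactic formulas. The lemma speaks of "occurrences derived from the expansion of $\varphi$". The same operator is unfolded at many times and produces many copies $\exp^t(\varphi_2)$, all carrying the same identifiers. I must check that every copy bearing $\id{i}$ in the subtree traces back through a chain of expansion and \textsf{STEP} applications to $\varphi$. I will do this by a secondary induction along the path from $u$ to $v$. I must also check that the shift lemma for $\exp^t$ interacts correctly with the already-absolute intervals. I expect this part, rather than the interval arithmetic, to need the most care.
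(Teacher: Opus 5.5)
Your argument is correct in substance and is essentially the paper's proof. Both induct along the nesting of $\rho$ in $\varphi$, using three facts: propositional rules keep the time fixed, \textsf{STEP} drops non-temporal formulas, and the first (resp.\ second) operand of $\suntilop$/$\sreleaseop$ is unfolded only at times in $[a,b-1]$ (resp.\ $[a,b]$). The paper fixes the path $\pi=\langle\theta_n,\dots,\theta_0\rangle$ and proves two separate claims. One is a relative bound $t(u_i)+r_i$ for the $\theta_i$ strictly inside the outermost temporal operator, which occur in labels as $\exp^{t(u_i)}(\theta_i)$. The other is an absolute bound $r_i$ from that operator upward. Your shift lemma (temporally nested pairs of $\tinv(\exp^t(\psi))$ are those of $\tinv(\psi)$ shifted by $t$) merges these into one induction, and it also gives the lower bound that the paper leaves as analogous. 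You apply the hypothesis to $\exp^t(\varphi_2)$, which is not a subformula of $\varphi$, so phrase the induction on formula size, which $\exp^t$ preserves.

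One thing to correct: the check you announce as the main obstacle is false, and so is the sentence ``Nothing else creates copies carrying the same identifiers'' in Step~1(ii). Take $t(u)=0$ and let $\Gamma(u)$ contain $\mglob{[0,5]}\even{[0,3]}p$ together with its extraction $\varphi=\even{[0,3]}p$. The pair of $p$ in $\tinv(\varphi)$ is $(0,3)$. Yet at time $5$ the parent extracts $\even{[5,8]}p$, whose $p$ bears the same identifier and can occur at time $8$ in the subtree rooted at $u$. So identifiers do not determine provenance. The lemma holds only because it quantifies over occurrences \emph{derived from} $\varphi$, which is the paper's ``derived \dots\ along $\pi$''. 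Define derivation by the chain of rule applications starting from $\varphi$ and its marked copies. Then Step~2 goes through as written. Drop the secondary induction tying every copy bearing $\id{i}$ back to $\varphi$, since it cannot succeed.
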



\subsubsection{Jump Size Calculation}
\label{sec:jump-size-calculation}

To guarantee that the \textsf{JUMP} rule preserves the correctness of the algorithm, we compute the jump size $k^{\ast}$ as the minimum of two bounds.

\paragraph{Bounds Limit ($k$)} 

The first limit ensures that the jump does not cross a time instant at which an operator becomes active or expires.
It is defined as
\[
    k(u) = \min\{t\in K(u)\mid t>t(u)\} - t(u),
\]
where the set $K(u)$ collects bounds of temporal operators in $u$:
\[
    K(u) = \set{a, b\mid \anybinop{[a,b]}{\varphi_1}{\varphi_2}\in\Gamma(u),\ \mathsf{B}\in\set{\suntilop,\marked{\suntilop},\sreleaseop,\marked{\sreleaseop}}}
\]

We illustrate the computation of the bounds on node $u_2$ from the tableau in Fig.~\ref{fig:tableau-example-basic}.
We have $K(u_2) = \set{0, 4, 7}$, and thus $k(u_2) = 4$.

\paragraph{Soundness Limit ($k^{\ast}_{\text{sound}}$)}

Consider a temporal operator $\varphi = \msuntil{[a,b]}{\varphi_1}{\varphi_2}$ or $\varphi = \msrelease{[a,b]}{\varphi_2}{\varphi_1}$ in a poised node $u$.
The basic tableau would emit the invariant part $\varphi_1$ at every
intermediate time instant $t(u)+j$, with $1 \leq j < k$, before reaching the
jump destination $t(u)+k$.
To guarantee the soundness of the approach, we need to be sure that such invariants do not generate conflicts with other formulas in the node. 

We first collect the validity intervals of atomic propositions that may be generated by the invariant part of active postponed temporal operators:
\begin{align*}
    N(u) = \bigcup\limits_{\varphi \in \Gamma(u)} \set{(t(u) + l, t(u) + r) \mid (l, r) \in \tinv(\varphi_1), \varphi = \msuntil{[a,b]}{\varphi_1}{\varphi_2} \lor \varphi = \msrelease{[a,b]}{\varphi_2}{\varphi_1}}.
\end{align*}
An element $(l_n,r_n)\in N(u)$ describes the validity interval of an invariant emitted at time instant $t(u)$ by an active temporal operator $\varphi$, i.e., $\exp^{t(u)}(\varphi_1)$.
If the same invariant is emitted at a skipped time $t(u)+j$, its interval is shifted to $(l_n+j,r_n+j)$.

We then collect the validity intervals of atomic propositions that may be
generated by temporal operators in $\Gamma(u)$ whose parent is not active:
\begin{align*}
    O(u) = \bigcup\limits_{\varphi \in \Gamma(u)} \set{(l,r) \mid (l,r) \in \tinv(\varphi), \varphi = \anybinop{[a,b]}{\varphi_1}{\varphi_2}, \mathsf{B} \in \set{\suntilop, \marked{\suntilop}, \sreleaseop, \marked{\sreleaseop}}, \lnot \pactive(\varphi, u)} 
\end{align*}

If an interval in $N(u)$ intersects one in $O(u)$, we cannot exclude that the skipped invariant conflicts with other formulas.
Since we do not know which propositions will eventually form an inconsistent core, we perform a \textsf{STEP} as a safe choice:
the \textsf{JUMP} rule is disabled whenever
\begin{equation}
    \label{eq:sound-condition}
    \exists (n, o) \in N(u) \times O(u) \text{ s.t. } \id{n} \neq \id{o} \land n \cap o \neq \emptyset
\end{equation}

If no intersection exists, for every
$n=(l_n,r_n)\in N(u)$ and $o=(l_o,r_o)\in O(u)$, either $r_n<l_o$ or $r_o<l_n$.
The case $r_o < l_n$ is always safe: delaying the emitted invariants moves the interval to the future, so other proposition intervals will never intersect with it.
In case $r_n < l_o$ instead, shifting the invariant to the future might cause the intersection of the intervals.
We define the sound jump limit as the earliest distance at which propositions may appear together:
\[
    k^{\ast}_{\text{sound}}(u) = \min\big(\set{l_o - r_n \mid (n, o) \in N(u) \times O(u) \land \id{n} \neq \id{o} \land r_n < l_o} \cup \set{+\infty}\big)
\]

Let us evaluate the soundness bounds on node $u_2$ in Fig.~\ref{fig:tableau-example-basic}:
we have $N(u_2) = \set{(0,0)}$ and $O(u_2) = \set{(0,4), (4,6), (4,7)}$, hence $k^{\ast}_{\text{sound}}(u_2) = 4$.

\paragraph{Completeness Guard}

Consider a temporal operator $\varphi = \msuntil{[a,b]}{\varphi_1}{\varphi_2}$ or $\varphi = \msrelease{[a,b]}{\varphi_2}{\varphi_1}$ in a poised node $u$.
The \textsf{JUMP} rule reconstructs the branch in which this temporal obligation is postponed until the jump destination $t(u) + k$. 
In the basic tableau, an accepting branch may satisfy the temporal obligation (i.e., select the $\Gamma_{\varphi}(u_s)$ expansion rule) at some intermediate time $t(u)+j$, with $1 \leq j < k$.
If time step $t(u)+j$ is skipped, the \textsf{JUMP} rule may reach a node rejected due to the postponed obligation, even though the basic tableau had an accepting branch rooted at $t(u)+j$.

We first collect the validity intervals of atomic propositions that may be generated by the target (i.e., $\varphi_2$) part of active temporal operators:
\begin{align*}
    M(u) = \bigcup\limits_{\varphi \in \Gamma(u)} \set{(t(u) + l, t(u) + r) \mid (l, r) \in \tinv(\varphi_2), \varphi = \msuntil{[a,b]}{\varphi_1}{\varphi_2} \lor \varphi = \msrelease{[a,b]}{\varphi_2}{\varphi_1} }.
\end{align*}
An element $(l_m,r_m)\in M(u)$ describes the validity interval of a proposition generated by the formula emitted at time $t(u)$, namely $\exp^{t(u)}(\varphi_2)$.
If the same target is emitted at some time $t(u)+j$ with $1 \leq j < k$, its interval is shifted to $(l_m+j,r_m+j)$.

We then collect the validity intervals of propositions that may generate conflicts with a target extraction.
These consist of propositions already present in the node and the ones generated by non-expired temporal operators without an active parent:
\[
    S(u) = O(u)
    \cup
    \set{
        (t(u),t(u)) \mid
        \varphi \in \Gamma(u),\
        \varphi = f(R)\bowtie k \lor \varphi = \neg(f(R)\bowtie k),\
        \lnot\pactive(\varphi,u)
    }.
\]

If an interval in $M(u)$ intersects one in $S(u)$, then we cannot determine when the invariant stops conflicting with the target.
Since the extraction might be satisfiable at the next instant, we perform a \textsf{STEP}, and disable the \textsf{JUMP} rule whenever:
\begin{equation}
    \label{eq:complete-condition}
    \exists m \in M(u), \exists s \in S(u) \text{ s.t. } \id{m} \neq \id{s} \land m \cap s \neq \emptyset
\end{equation}

If no such intersection exists, then for every
$m=(l_m,r_m)\in M(u)$ and $s=(l_s,r_s)\in S(u)$, either $r_m<l_s$ or $r_s<l_m$.
Case $r_s < l_m$ is safe, as the potentially conflicting interval is before the target one, and delaying extraction would move the latter to the future.
Otherwise, when $r_m < l_s$, extracting the target at a later time can cause the two intervals to intersect, leading to a possible conflict.
However, if the \textsf{JUMP} rule is enabled, then no two intervals $m$ and $s$ intersect, and all currently active temporal formulas can be satisfied by extracting their targets at the current time.
Thus, there is no need for a jump limit to ensure completeness.

In the running example of Fig.~\ref{fig:tableau-example-basic}, we have $M(u_2) = \set{(0,0)}$ and $S(u_2) = \set{(0,4), (4,7)}$.
Note that proposition $a$ has been extracted from the active operator $\glob{[0,4]}a$,
so it is marked with $\parof{a}{\glob{[0,4]}a}$, and its validity interval is excluded from $S(u_2)$.

\subsubsection{Rule Application}
\label{sec:rule-application}

Let $u$ be a poised node in which \eqref{eq:sound-condition} and \eqref{eq:complete-condition} do \emph{not} hold.
The \textsf{JUMP} rule creates one child $u'$ with $t(u') = t(u) + k^{\ast}(u)$, where the jump size $k^{\ast}(u)$ and node label $\Gamma(u')$ are defined as
\begin{gather*}
    k^{\ast}(u) = \min(k(u), k^{\ast}_{\text{sound}}(u)) \\
\begin{aligned}
    \Gamma(u')= \
    &\set{\parof{\suntil{[a,b]}{\varphi_1}{\varphi_2}}{\psi}\mid \parof{\anybinop{[a,b]}{\varphi_1}{\varphi_2}}{\psi} \in\Gamma(u), \  \mathsf{B}\in\set{\suntilop,\marked{\suntilop}}, \ t(u') \leq b} \\
    &\cup\set{\parof{\srelease{[a,b]}{\varphi_1}{\varphi_2}}{\psi} \mid  \parof{\anybinop{[a,b]}{\varphi_1}{\varphi_2}}{\psi}\in\Gamma(u), \ \mathsf{B}\in\set{\sreleaseop,\marked{\sreleaseop}}, \ t(u') \leq b}
\end{aligned}
\end{gather*}

The label $\Gamma(u')$ omits real-valued constraints and unmarks operators, so that expansion rules can build nodes for the next time instants.
The bound $k^{\ast}_{\text{sound}}(u)$ guarantees that the \textsf{JUMP} rule preserves remaining temporal obligations while avoiding redundant checks.

Finally, we can evaluate the \textsf{JUMP} bounds on node $u_2$ of Fig.~\ref{fig:tableau-example-basic}.
We have $k(u_2) = 4$ and $k^{\ast}_{\text{sound}}(u_2) = 4$;
hence, the \textsf{JUMP} rule can be applied with a jump size of 4, skipping time instants 1 to 3.
With the \textsf{JUMP} rule, node $u_4$ is the (only) child of $u_2$.

\paragraph{Shared-Variable Optimization}
For a set of atomic constraints to be inconsistent, at least two of them have to share a variable.
Moreover, two occurrences of a Boolean variable cannot conflict if they have the same polarity.
From this observation we can derive a simple optimization that allows for applying the \textsf{JUMP} rule more often:
in \eqref{eq:sound-condition} and \eqref{eq:complete-condition} we consider only
pairs of intervals that come either from two opposite occurrences of the same Boolean variable (such as $a$ and $\neg a$)
or from two atomic constraints such that at least one variable occurs in both.

\subsubsection{Correctness}

We prove soundness and completeness of the \textsf{JUMP} rule by showing, respectively, that an accepted branch in the \textsf{JUMP} tableau implies one in the basic tableau, and \emph{vice versa}.

\begin{lemma}
    \label{lemma:jump-sound}
    Let $u$ be a poised node. Let $u'$ be the node obtained by applying the JUMP rule to $u$, or the STEP rule if JUMP conditions are not satisfied. If the subtree rooted at $u'$ contains an accepted branch, then the subtree of the basic tableau rooted at $u$ also contains one.
\end{lemma}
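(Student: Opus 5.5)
The STEP case is immediate, because the STEP rule is shared by both tableaux. So I only treat the JUMP case, with $t(u')=t(u)+k^\ast$. The plan is to build, in the basic tableau, a path from $u$ to a node $v$ with $t(v)=t(u')$ whose label equals $\Gamma(u')$ up to parent annotations. Since the expansion rules never consult annotations, the accepting branch below $u'$ can then be replayed below $v$ rule by rule. Any JUMP applications in that branch are handled by induction on the number of JUMP applications (or on branch length), using the lemma itself as the induction hypothesis.

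\textbf{Building the path.} Starting from $u$, apply STEP. At each intermediate time $t(u)+j$, with $1\le j<k^\ast$, expand every active operator by choosing the postponing child $u_p$.
\begin{itemize}
\item Since $k^\ast\le k(u)$, no bound in $K(u)$ lies strictly between $t(u)$ and $t(u')$. Hence every operator in $\Gamma(u)$ keeps the same activity status throughout, none expires, and none becomes newly active in that window.
\item Each resulting node contains the marked operators, the invariants $\exp^{t(u)+j}(\varphi_1)$, and whatever non-temporal parts these invariants decompose into. I decompose the invariants completely, resolving their disjunctions by replaying the choices made for the same invariant at time $t(u)$.
\item Those choices are locally consistent at $t(u)$, because $u$ is poised and not rejected.
\end{itemize}
The last STEP reaches time $t(u')$. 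After the invariants' nested temporal formulas are removed, the label coincides with $\Gamma(u')$ except for extra formulas produced by the invariants.

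\textbf{Main obstacle.} Two things must be shown.
\begin{enumerate}
\item No node on this path is rejected.
\item The extra nested temporal formulas emitted by the skipped invariants (the $\exp^{t(u)+j}(\varphi_1)$ terms) do not spoil acceptance below $v$.
\end{enumerate}
For both I would use Lemma~\ref{lemma:te-limit}. Every atomic occurrence produced from a skipped invariant lies in an interval $(l_n+j,r_n+j)$ with $n\in N(u)$. By condition \eqref{eq:sound-condition} and the definition of $k^\ast_{\text{sound}}$, for every $o\in O(u)$ with a different identifier, either $r_o<l_n\le l_n+j$ or $r_n+j<r_n+k^\ast\le l_o$. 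So the shifted interval is disjoint from every interval contributed by operators without an active parent. The remaining sources of conflict come from occurrences with an active parent, namely other instances of the same invariant or target from the same operator. Here I argue that copies of one invariant emitted at different times are the same formula shifted in time, so the choices I copy from time $t(u)$ cause no clash.

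For the extra obligations below $v$, I would not merely discard them. Instead I show that each one can be discharged in the basic tableau by the same choices used for its time-$t(u)$ copy, shifted by $j$. Disjointness from $O(u)$ persists at all later times, by Lemma~\ref{lemma:te-limit}, so every node of the replayed branch augmented with these obligations stays consistent.

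I expect the most delicate part to be this bookkeeping of identifiers and active-parent occurrences, in particular confirming that the exclusion of same-identifier pairs and of $\pactive$ occurrences from $O(u)$ loses no genuine conflict. If needed I would strengthen the invariant carried along the path to: every atomic occurrence present at a time $t$ either belongs to the original branch or has an interval disjoint from every other interval with a different identifier.
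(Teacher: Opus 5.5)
Your scaffold is sound: replaying the time-$t(u)$ choices inside the skipped window, discharging the nested obligations emitted at skipped instants by shifted replay, and inducting on the number of JUMPs below $u'$. On the extra obligations you are in fact more explicit than the paper, whose proof simply asserts that the postponement branch reaches $t(u)+k$ with the same obligations as $u'$.

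The gap is in the step you flag yourself, the occurrences with an active parent. You classify them as other instances of the same invariant (or target) of the same operator. You then dismiss them on the grounds that copies are time shifts of one another, so replaying the time-$t(u)$ choices causes no clash. That is false. Take $t(u)=0$ and $\beta=\mglob{[0,10]}(\glob{[2,2]}p\land\glob{[3,3]}\neg p)$, with the pending $\glob{[2,2]}p$ and $\glob{[3,3]}\neg p$ in $\Gamma(u)$ carrying active parent $\beta$. The copy emitted at time $1$ demands $p$ at time $3$, exactly where the pending $\glob{[3,3]}\neg p$ demands $\neg p$. No choices are involved, so shift invariance cannot help. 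What rules out skipping time $1$ is the contribution of $\beta$ itself, which has no active parent, to $O(u)$. Its interval $(3,13)$ for $\neg p$ contains that of the pending occurrence and gives $k^\ast_{\text{sound}}(u)\le 1$; its interval $(2,12)$ for $p$ even triggers \eqref{eq:sound-condition}. Moreover, occurrences with an active parent may descend from a marked operator other than the one emitting your $\rho$, and the shift argument says nothing about those.

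The missing ingredient is the containment fact the paper invokes to justify excluding such occurrences from $O(u)$. Suppose $\theta$ was extracted at time $t'$ from an operator with interval $[a,b]$, with $a\le t'<b$ for the left operand and $a\le t'\le b$ for the right one. Then every interval of $\tinv(\theta)$ for nested literals, and the instant $t'$ for top-level ones, lies inside the same-identifier interval of that operator's $\tinv$, since the latter spans all these shifts. Follow parent annotations up to an ancestor without an active parent. Every literal that can appear below $u$ at a time $>t(u)$ is then covered by an interval of $O(u)$ carrying its own identifier. This holds both for pending occurrences in $\Gamma(u')$ and for those produced by skipped copies.

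With that, \eqref{eq:sound-condition} together with $j<k^\ast\le k^\ast_{\text{sound}}(u)$ forbids any skipped-copy literal from co-occurring with a literal of a different identifier. Same-identifier literals are syntactically the same constraint, and the new literal alone is consistent because its counterpart occurs in a non-rejected node. Hence any minimal inconsistent set containing a new literal must contain one of a different identifier, which is impossible. This settles both of your main obstacles at once, including the same-identifier worry. Your proposed strengthened invariant only restates what this fact proves.
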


\begin{theorem}[Soundness]
    \label{thm:jump-sound}
    If the tableau built by using the \textsf{JUMP} rule has an accepted branch, then the basic tableau for a formula $\phi$ also has an accepted branch
\end{theorem}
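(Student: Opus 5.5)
We derive Theorem~\ref{thm:jump-sound} from Lemma~\ref{lemma:jump-sound} by induction on the number of \textsf{JUMP} applications along an accepted branch of the \textsf{JUMP} tableau. Fix an accepted branch $\beta = u_0, u_1, \dots, u_n$ of the tableau rooted in $\phi$ built with the \textsf{JUMP} rule. Expansion rules and local rejection checks are shared by both tableaux, and \textsf{STEP} only additionally propagates parent annotations, which do not affect acceptance. So every edge of $\beta$ is either an ordinary edge of the basic tableau or a \textsf{JUMP} edge $u_i \to u_{i+1}$ from a poised node.

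We argue by induction on the number $m$ of \textsf{JUMP} edges occurring in a suffix of $\beta$, processed from the leaf upwards. The claim is: for every node $u_i$ of $\beta$, the subtree of the basic tableau rooted at $u_i$ contains an accepted branch.

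The base case is a suffix $u_i,\dots,u_n$ without \textsf{JUMP} edges. This suffix is literally a branch of the basic tableau rooted at $u_i$, up to annotations. Its leaf $u_n$ is poised, not rejected, and has an empty \textsf{STEP} successor label, so it is accepted in the basic tableau as well.

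For the inductive step, let $u_i \to u_{i+1}$ be the lowest \textsf{JUMP} edge above a suffix already handled. By the induction hypothesis, the subtree rooted at $u_{i+1}$ contains an accepted branch. One point needs care here: Lemma~\ref{lemma:jump-sound} is phrased with "the subtree rooted at $u'$", and we read it as the subtree of the \textsf{JUMP} tableau. Since $u_{i+1}$ lies on $\beta$, that \textsf{JUMP} subtree does contain an accepted branch, so Lemma~\ref{lemma:jump-sound} applies directly. It yields an accepted branch in the basic subtree rooted at $u_i$. The nodes $u_j$ with $j<i$ down to the next higher \textsf{JUMP} edge are then connected to $u_i$ by basic edges, so the basic subtrees rooted at them also contain accepted branches.

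Iterating the inductive step up to the root $u_0$, labeled with $\phi$, gives an accepted branch of the basic tableau for $\phi$.

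The only delicate point is the bookkeeping. Lemma~\ref{lemma:jump-sound} concludes about the basic subtree at $u$ but assumes an accepted branch under $u'$ in the \textsf{JUMP} tableau. The induction is therefore phrased purely on $\beta$, using the lemma once per \textsf{JUMP} edge. Combined with the root identity between the two tableaux, this yields the theorem. By the soundness of the basic tableau, $\phi$ is then also satisfiable.
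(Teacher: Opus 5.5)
Your proof is correct and takes the same route as the paper, which also applies Lemma~\ref{lemma:jump-sound} at each poised node of the accepted branch where \textsf{JUMP} fires. Your leaf-to-root induction makes explicit the chaining the paper leaves implicit, and you rightly need no separate case for $k^{\ast}(u_i) = k^{\ast}_{\text{sound}}(u_i)$ (the paper treats it apart), since the lemma has no such side condition. One remark: your induction hypothesis, an accepted branch in the basic subtree at $u_{i+1}$, is exactly the premise the lemma needs when read as a one-step statement about basic subtrees, which is the reading its proof actually supports. So invoke it there, rather than the \textsf{JUMP}-subtree reading, under which your induction hypothesis is never used.
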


\begin{theorem}[Completeness]
    \label{thm:jump-complete}
    If the basic tableau for a formula $\phi$ has an accepted branch, then the tableau built by using the \textsf{JUMP} rule also has an accepted branch.
\end{theorem}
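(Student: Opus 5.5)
Completeness (Theorem~\ref{thm:jump-complete}) asks for the converse of Lemma~\ref{lemma:jump-sound}. I would prove a local lemma and then chain it along the accepted branch. The local statement is: let $u$ be a poised node on which \eqref{eq:sound-condition} and \eqref{eq:complete-condition} do not hold, and let $u'$ be its \textsf{JUMP} child. If the basic-tableau subtree rooted at $u$ contains an accepted branch, then the basic-tableau subtree rooted at $u'$ also contains one (after re-expansion at time $t(u')$).

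Given the local lemma, the theorem follows by induction on the length of the basic accepting branch, or on the remaining time horizon, which is bounded by the maximal interval end in $\phi$. Expansion rules are identical in both tableaux. At a poised node where \textsf{JUMP} is disabled, the \textsf{STEP} child coincides with the basic one. At a poised node where \textsf{JUMP} fires, the local lemma replaces the basic suffix starting at $u$ by an accepting basic suffix starting at $u'$, to which the induction hypothesis applies.

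For the local lemma, I would start from an accepting basic branch $\pi$ from $u$ and look at the times $t(u)+j$ with $1\le j<k^{\ast}(u)$. Since $k^{\ast}\le k(u)$, no operator becomes active or expires strictly inside the skipped window. Hence every formula of $\Gamma(u)$ has the same status (active or not, marked or not) at each skipped instant.

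\begin{itemize}
\item \textbf{Case 1: $\pi$ postpones every active operator throughout the window.} Then $\pi$ reaches time $t(u')$ with a label whose temporal part is exactly $\Gamma(u')$. Any extra formulas come from invariants emitted at skipped instants. The suffix of $\pi$ therefore contains a branch from $u'$ with a label subset, and dropping obligations preserves acceptance, since a rejection needs an inconsistent set and subsets of consistent sets stay consistent.
\item \textbf{Case 2: $\pi$ first satisfies some active operator $\varphi$ (the $u_s$ choice) at an intermediate time $t(u)+j$.} I would build an alternative branch that instead satisfies $\varphi$ at time $t(u)$ itself, or equivalently at the first instant at which it is enabled.
\end{itemize}

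The core of Case 2 uses the failure of \eqref{eq:complete-condition}. Every interval in $M(u)$ is disjoint from every interval in $S(u)$ with a different identifier. By Lemma~\ref{lemma:te-limit}, the propositions arising from $\exp^{t(u)}(\varphi_2)$ appear only at times in $M(u)$-intervals. Those arising from non-parent-active formulas and literals of $u$ appear only in $S(u)$-intervals. So the two families never co-occur in a node.

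I would then also argue that propositions from $\exp^{t(u)}(\varphi_2)$ cannot clash with the invariants the alternative branch stops emitting. That branch emits fewer invariants than $\pi$, so any conflict it meets also occurs in $\pi$. Choices inside $\exp^{t(u)}(\varphi_2)$ are copied from $\pi$'s choices inside $\exp^{t(u)+j}(\varphi_2)$, with all times shifted by $-j$.

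This shifting argument is the step I expect to be the main obstacle. One must show that the whole subtree of choices for the shifted target remains locally consistent with everything else at every time. That needs an induction over nodes using Lemma~\ref{lemma:te-limit} together with the identifier-disjointness of intervals, taking care of parent-active formulas, which are excluded from $S(u)$. I would first try to show that parent-active formulas are generated by the very operators being postponed, so they vanish or persist uniformly in both branches.

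Once every operator satisfied in $\pi$ within the window has been moved to time $t(u)$, the modified branch postpones everything across the window. Case 1 then applies, and the local lemma follows.
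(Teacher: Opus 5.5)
\textbf{The local lemma you plan to chain is false, so the decomposition fails.} Your core technical step is the right one. Re-timing the satisfaction of an active operator from $t(u)+j$ to $t(u)$, and using the failure of \eqref{eq:complete-condition} together with Lemma~\ref{lemma:te-limit} to exclude clashes of the shifted target, is the same idea as the paper's comparison of $\Gamma_A$ with $\Gamma_B^j$. The paper phrases it contrapositively, splitting a minimal inconsistent set into target-derived and other propositions.

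Here is a counterexample. Take $\even{[0,5]}\glob{[10,10]}p \land \glob{[15,15]}\neg p$ and the poised node $u$ at time $0$ that postpones the eventually.
\begin{itemize}
\item The guard sets are $N(u)=\set{(0,0)}$ (from $\top$), $O(u)=S(u)=\set{(0,4),(10,15),(15,15)}$ and $M(u)=\set{(10,10)}$.
\item The only intersecting pairs, $(0,0)$ with $(0,4)$ and $(10,10)$ with $(10,15)$, share an identifier. So \textsf{JUMP} fires with $k^\ast(u)=\min(5,10)=5$.
\item The basic subtree of $u$ accepts: satisfy the eventually at time $1$, giving $p$ at $11$ and $\neg p$ at $15$.
\item Every branch below $u'$ must satisfy the eventually at its deadline $5$, which produces $p$ and $\neg p$ together at time $15$.
\end{itemize}
The theorem still holds, but only through the sibling that satisfies the eventually already at time $0$, and that node is not in the subtree of $u$.

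This is also why Case 2 cannot close your lemma. Since $\varphi$ is already marked in the poised node $u$, satisfying it at $t(u)$ means taking the $u_s$ child of the node where $\varphi$ was expanded. The modified branch therefore runs through a different poised node $\hat u$ and never reaches $u'$. The guards and jump length at $\hat u$ differ from those at $u$: the bounds of $\varphi$ leave $K$, its invariant leaves $N$, and target-derived operators enter. So $\hat u$ may jump farther, and the re-routed branch may satisfy further operators inside that longer window. Saying that Case 1 then applies to $u$'s window tells you nothing about $\hat u$'s \textsf{JUMP} child, and your induction step of replacing the suffix from $u$ by one from $u'$ has no justification.

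To repair the argument:
\begin{itemize}
\item State the inductive claim for nodes at the beginning of an instant (the root and children of \textsf{STEP}/\textsf{JUMP}), where the time-$t(u)$ siblings are still in scope.
\item From such a node, choose an accepting basic branch that minimizes (e.g., the sum of) the times at which it satisfies the temporal operators occurring, up to propositional expansion, in that node's label.
\item At the poised node $\hat u$ of that branch, if \textsf{JUMP} fires, run your shift argument with \eqref{eq:complete-condition} evaluated at $\hat u$. It shows that no such operator is satisfied strictly inside $\hat u$'s window, since otherwise the measure would decrease.
\item Your Case 1, with monotonicity, then gives an accepting basic branch from $\hat u$'s \textsf{JUMP} child, and induction on time finishes.
\end{itemize}
The paper gets the same effect by arguing by contradiction on the entire \textsf{JUMP} tableau, so that the node labeled $\Gamma_A$ is available as one of its rejected nodes.
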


We report manually written proofs of these results in
\appendixreference{app:proofs}{B}.
Moreover, we have formalized the above claims and their proofs with the Lean 4 proof assistant \cite{deMouraU21},
with the help of GPT 5.6 Sol \cite{openai2026gpt56}.
The Lean formalization is publicly available in \cite{zenodo}.

\section{\stlsat{} Tool}
\label{sec:tool}

\stlsat{}~\cite{stlsat} is a modular tool for satisfiability checking of bounded discrete-time STL formulas.
The full implementation, including source code, benchmarks, and reproducibility scripts, is publicly available\footnote{\url{https://github.com/ZamponiMarco/stlsat/}}.
Given a formula $\varphi$, \stlsat{} determines whether there exists a discrete-time signal $w: \mathbb{N} \to \mathbb{R}^n$ such that $(w, 0) \models \varphi$. 
\stlsat{} implements three different solving engines working over the same formula representation:
\begin{itemize}[nosep]
    \item the tableau-based method described in \S\ref{sec:tableau};
    \item a FOL engine for STL inspired by~\cite{LiVR22};
    \item a quantifier-free encoding inspired by MPC~\cite{RamanDMMSS14} and bounded model checking~\cite{ClarkeBRZ01} methods.
\end{itemize}

\subsection{Tableau Method}

\stlsat{} features an efficient implementation of the tableau methodology described in \S\ref{sec:tableau}. 

It introduces a lightweight syntactic simplification phase before tableau construction. This pre-processing step applies common rewriting rules, including nested conjunction and disjunction flattening, redundant operand removal (e.g., $\varphi \land \top \equiv \varphi$), and merging of temporal operator intervals (e.g., $\glob{[0,3]}\varphi \land \glob{[2,5]}\varphi \equiv \glob{[0,5]}\varphi$). These simplifications reduce formula complexity and, consequently, the depth of tableau expansion without affecting satisfiability.

Furthermore, \stlsat{} replaces the tableau expansion of~\cite{MelaniBC25}, which relies on the program call stack for node exploration, with an explicit stack-based implementation. This design enables finer control over memory allocation and improves performance when exploring deep tableaux.

\stlsat{} performs local consistency checks of atomic propositions by invoking the Z3 solver \cite{Z3}.
We also implemented a custom solver for systems of difference constraints by reducing them to negative-cycle detection in graphs~\cite{cormen2022introduction,armando2004sat,DutertreM06}.
This optimization improves performance by eliminating the external Z3 calls for supported formulas.

\paragraph{Witness Signals}
\stlsat{} is able to extract witness signals for satisfiable STL formulas.
The witness signal is constructed by traversing the accepting branch of the basic tableau and collecting the atomic propositions extracted at each time instant.
The resulting signal is then represented as a piecewise function over the time domain, with values determined by the extracted propositions.

\paragraph{\textbf{Unsat-Core Extraction}}

Whenever a formula is determined to be unsatisfiable, \stlsat{} performs a lightweight unsatisfiable-core extraction to identify the source of inconsistency. 
We represent the input formula $\varphi$ as a conjunction of subformulas after repeatedly applying the $\land$-expansion rule, so that at the root node $u_0$ we have
\(
    \Gamma(u_0) = \set{\varphi_1, \dots, \varphi_m},
\)
and
$\varphi = \bigwedge_{i=1}^m \varphi_i$.
Let $\mathcal{A}$ be the set of atomic predicates $f(R)\bowtie k$ occurring in $\varphi$.
We associate each atomic predicate $a \in \mathcal{A}$ to its originating subformula $\varphi_i$ through a mapping $M: \mathcal{A} \to \set{1, \dots, m}$.

During tableau expansion, every node contains a set of atomic predicates $A(u) \subseteq \mathcal{A}$ extracted from the input formula $\varphi$. When a node is rejected by the underlying theory solver, the solver provides the subset of conflicting atomic predicates responsible for the inconsistency:
\begin{equation*}
    C_u = \set{a_{u,1}, a_{u,2}, \dots, a_{u,k}} \subseteq A(u).
\end{equation*}

When all branches of the tableau are rejected, the local unsatisfiable cores are aggregated into a single set $C = \bigcup_{u\text{ rejected}} C_u$ and mapped back to their originating formulas in the root node, producing a high-level unsatisfiable core $U \subseteq \Gamma(u_0)$ that identifies a subset of subformulas responsible for the inconsistency, with
\(
    U = \set{\varphi_i \mid \exists a \in C : M(a) = i}.
\)
The returned core is unsatisfiable, but not necessarily irreducible or minimal. Its further evaluation and minimization, as well as solver-native core extraction for the remaining engines, are left to future work.

\subsection{First-Order Logic Encoding}
\label{sec:fol-encoding}

An almost out-of-the-box solution for STL satisfiability is to translate a formula $\varphi$ into First-Order Logic (FOL) by recursively replacing each operator with its semantics definition, and then check the resulting formula with a theorem prover or an SMT solver with quantifier support.

\citet{LiVR22} applies this idea to MLTL. They encode signals as functions from non-negative integers (representing time) to Boolean values: for each atomic proposition $p$, they define a function $p : \mathbb{N} \rightarrow \{\top, \bot\}$, such that $p(t) = \top$ iff $p$ holds at time $t$.
Moreover, they define a Boolean function $f$ taking two integers $k$ and $\mathit{len}$, which returns $\top$ iff the FOL translation of $\varphi$ is true at time $k$ for a signal of length $\mathit{len}$.

We implement an optimized STL encoding where signals are integer-to-real functions (for real values) or integer-to-Boolean functions (for Boolean signals).  The translation follows the semantics in \S\ref{sec:background}. 
To reduce formula size, we employ the usual \emph{ad hoc} FOL translations for the $\evenop{}$ and $\globop{}$ operators, rather than translating them to $\untilop{}$.

\subsection{Quantifier-Free SMT Encoding}
\label{sec:smt-encoding}

Closely related to the FOL approach, STL satisfiability can also be encoded as a quantifier-free SMT problem over linear real arithmetic, similarly to SMT encodings used for optimal control~\cite{RamanDMMSS14} and bounded model checking~\cite{ClarkeBRZ01}.

The encoding first computes the formula horizon and creates symbolic variables for each signal variable at every time instant. The encoding recursively unfolds the semantics of STL operators. Boolean operators are translated directly into their SMT counterparts, while temporal operators are expanded over their intervals. The resulting quantifier-free linear real arithmetic formula can be solved by an SMT solver.

\section{Experiments}
\label{sec:experiments}

We compare the three \stlsat{} engines against state-of-the-art STL and MLTL satisfiability tools.
For both logics, we include \stltree{}~\cite{stltree}, the original Python implementation of~\cite{MelaniBC25}; for MLTL, we also compare with the best-performing technique for MLTL satisfiability analyzed by \citet{LiVR22}, namely their FOL encoding which we call MLTLSAT, obtained by running the tool made available by the authors \cite{mltlsat}. MLTLSAT translates MLTL formulas into SMTLIB files, which we feed to Z3 4.15.8 \cite{Z3}.
We ran all experiments on a server equipped with an AMD EPYC 9654 CPU and 755 GB of RAM
running Ubuntu 24.04.
The parallel configuration runs all three engines concurrently and reports the wall-clock time until the first engine terminates, thus representing a portfolio result.
A replication package can be found in \cite{zenodo}.

\subsection{Benchmarks}
\label{sec:benchmarks}

We evaluate \stlsat{} on MLTL and STL benchmark suites, whose descriptive statistics are reported in Table~\ref{tab:benchmark_statistics}.

\begin{table}
    \centering
    \caption{Descriptive statistics of the MLTL and STL benchmark suites. Values show median (95\textsuperscript{th} percentile) for depth, temporal depth and horizon, and mean for branching factor (i.e., number of node children). The last four columns show the frequency (\%) of each operator.}
    \label{tab:benchmark_statistics}
    \footnotesize
    \begin{tabular}{clccccrrrr}
        \toprule
        \multicolumn{2}{l}{\textbf{Bench. suite}} & \textbf{Depth} & \textbf{T. Depth} & \textbf{Horizon} & \textbf{B. F.} & $\evenop$ & $\globop$ & $\untilop$ & $\releaseop$ \\
        \midrule
        \multirow{3}{*}{\rotatebox[origin=c]{90}{MLTL}}
        & NASA-Boeing & 8.0 (11.0) & 1.0 (1.9) & 100000.0 (100001.8) & 1.80 & 12.8 & 87.2 & 0.0 & 0.0 \\
        & Random & 10.0 (20.0) & 6.0 (18.0) & 461.0 (1374.0) & 1.46 & 36.8 & 37.1 & 15.9 & 10.3 \\
        & Random0 & 10.0 (20.0) & 6.0 (18.0) & 316.0 (943.0) & 1.46 & 37.0 & 36.8 & 15.6 & 10.6 \\
        \midrule
        \multirow{2}{*}{\rotatebox[origin=c]{90}{STL}}
        & Random-STL & 10.0 (21.0) & 4.0 (8.0) & 1000.0 (10000.0) & 1.65 & 33.4 & 33.4 & 16.6 & 16.6 \\
        & Random0-STL & 11.0 (25.0) & 5.0 (12.0) & 205.5 (4234.1) & 1.54 & 33.3 & 33.0 & 16.9 & 16.7 \\
        \bottomrule
    \end{tabular}
\end{table}

\noindent\textbf{MLTL.}  
The first benchmark suite in~\citet{LiVR22} contains three subsets of MLTL formulas:
\begin{itemize}[nosep]
  \item \emph{NASA-Boeing}: 63 satisfiable formulas extracted from real-world avionics systems;
  \item \emph{Random}: 4,000 randomly generated formulas with realistic patterns and random integer time intervals;
  \item \emph{Random0}: 4,000 randomly generated formulas whose time intervals all start at 0.
\end{itemize}
All MLTL formulas use only Boolean variables, stressing temporal operators over large horizons without linear real arithmetic.

\noindent \textbf{STL.} The second benchmark suite comprises three groups of STL formulas:
\begin{itemize}[nosep]
  \item nine requirements collected by \citet{MelaniBC25} plus the running example $\Phi_w$;
  \item \emph{Random-STL}: 2,000 randomly generated formulas of varying length;
  \item \emph{Random0-STL}: 2,000 random formulas of varying length with all intervals starting at 0.
\end{itemize}

\noindent Note that the satisfiability of all generated formulas is not controlled beforehand and is ultimately determined by the solving engines. The generation process for the random STL benchmark suites is detailed in
\appendixreference{app:stl-bench}{C}.

\subsection{Results}

\begin{figure}
    \centering
    \begin{subfigure}[b]{0.3\textwidth}
        \centering
        \caption{NASA-Boeing suite}
        \label{fig:mltl-survival-nasa-boeing}
        \includegraphics[width=\linewidth]{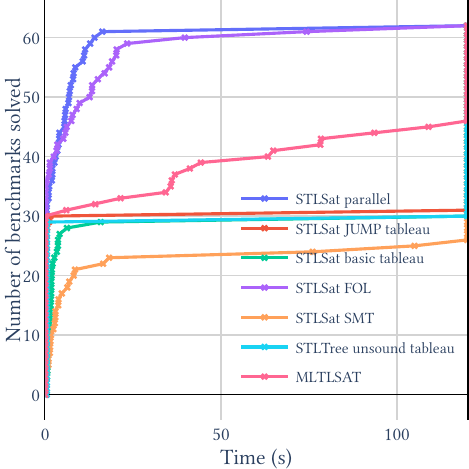}
    \end{subfigure}%
    \begin{subfigure}[b]{0.3\textwidth}
        \centering
        \caption{Random suite}
        \label{fig:mltl-survival-random}
        \includegraphics[width=\linewidth]{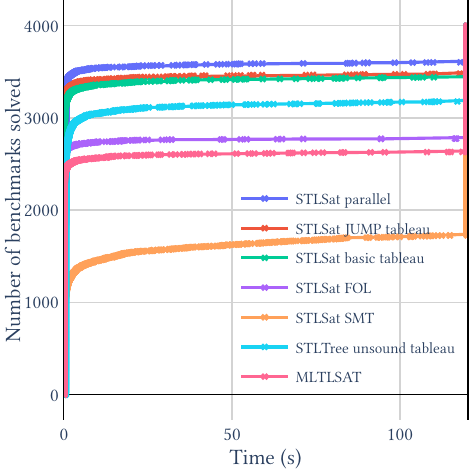}
    \end{subfigure}%
    \begin{subfigure}[b]{0.3\textwidth}
        \centering
        \caption{Random0 suite}
        \label{fig:mltl-survival-random0}
        \includegraphics[width=\linewidth]{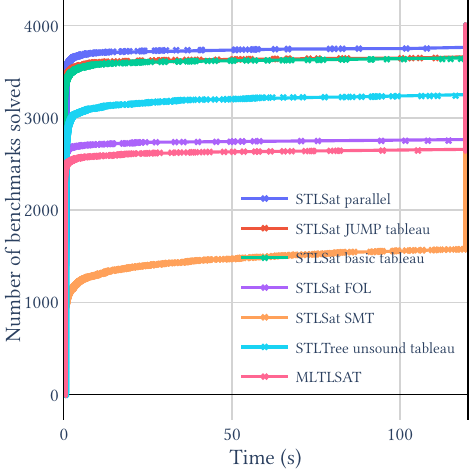}
    \end{subfigure}
    \caption{Survival plots for MLTL benchmarks \cite{LiVR22}. Higher is better.}
    \label{fig:mltl-plots}
\end{figure}

\begin{table}
    \centering
    \caption{Fastest \stlsat{} engine by benchmark and satisfiability class. Upper rows report counts and lower rows percentages among instances solved by at least one engine ($n$); ties are counted separately.}
    \label{tab:wins}
    \footnotesize
    \begin{tabular}{l l cccc r cccc r}
        & & \multicolumn{5}{c}{\textbf{sat}} & \multicolumn{5}{c}{\textbf{unsat}} \\
        \cmidrule(lr){3-7} \cmidrule(lr){8-12}
        & \textbf{benchmark} & \textbf{tableau} & \textbf{FOL} & \textbf{SMT} & \textbf{tie} & \textbf{$n$}
                             & \textbf{tableau} & \textbf{FOL} & \textbf{SMT} & \textbf{tie} & \textbf{$n$} \\
        \midrule
        \multirow{6}{*}{\rotatebox[origin=c]{90}{MLTL}}
        & \multirow{2}{*}{nasa-boeing} & 27 & \textbf{35} & 0 & 0 & 62 & -- & -- & -- & -- & 0 \\
        &                              & 43.5\% & \textbf{56.5\%} & 0.0\% & 0.0\% &  & -- & -- & -- & -- &  \\
        \cmidrule(lr){2-12}
        & \multirow{2}{*}{random} & \textbf{3166} & 226 & 14 & 17 & 3423 & \textbf{153} & 23 & 3 & 0 & 179 \\
        &                              & \textbf{92.5\%} & 6.6\% & 0.4\% & 0.5\% &  & \textbf{85.5\%} & 12.8\% & 1.7\% & 0.0\% &  \\
        \cmidrule(lr){2-12}
        & \multirow{2}{*}{random0} & \textbf{3278} & 190 & 3 & 16 & 3487 & \textbf{217} & 41 & 0 & 3 & 261 \\
        &                              & \textbf{94.0\%} & 5.4\% & 0.1\% & 0.5\% &  & \textbf{83.1\%} & 15.7\% & 0.0\% & 1.2\% &  \\
        \midrule
        \multirow{4}{*}{\rotatebox[origin=c]{90}{STL}}
        & \multirow{2}{*}{random} & \textbf{702} & 143 & 146 & 25 & 1016 & \textbf{227} & 121 & 22 & 43 & 413 \\
        &                              & \textbf{69.1\%} & 14.1\% & 14.4\% & 2.4\% &  & \textbf{55.0\%} & 29.3\% & 5.3\% & 10.4\% &  \\
        \cmidrule(lr){2-12}
        & \multirow{2}{*}{random0} & \textbf{682} & 36 & 81 & 19 & 818 & \textbf{284} & 108 & 40 & 22 & 454 \\
        &                              & \textbf{83.4\%} & 4.4\% & 9.9\% & 2.3\% &  & \textbf{62.6\%} & 23.8\% & 8.8\% & 4.8\% &  \\
        \bottomrule
    \end{tabular}
\end{table}

\noindent \textbf{MLTL.} We compare \stlsat{} with \stltree{} and the FOL encoding of \cite{LiVR22} (MLTLSAT). We show the results as survival plots in Fig.~\ref{fig:mltl-plots}, with a 120 s timeout.
In Table~\ref{tab:wins} we present the number of instances in which each \stlsat{} solving engine was strictly the fastest, grouped by benchmark and formula satisfiability for both MLTL and STL suites.
On the NASA-Boeing suite, the FOL encoding (\S~\ref{sec:fol-encoding}) dominates: it solved 62 benchmarks within 80 s, reporting a parsing error on one.
MLTLSAT solved 46 benchmarks within 108 s and timed out on 14 launching parsing errors on 3.
This advantage comes from a smaller generated encoding with fewer quantifiers: these formulas---derived from manually written requirement sets---contain only $\evenop{}$ and $\globop{}$ rather than $\untilop{}$ operators, and can benefit from our \emph{ad hoc} encoding of $\evenop{}$ and $\globop{}$.

On the other hand, tableau-based tools perform better on randomly generated formulas, with \stlsat{} outperforming STLTree despite using stricter jumping rules, thanks to the efficient algorithm implementation.
In fact, formulas in these suites contain many temporal operators with wide time intervals, which can greatly benefit from the tableau's \textsf{JUMP} rule.
The SMT encoding (\S~\ref{sec:smt-encoding}) performed worst on average, consistent with the findings of \cite{MelaniBC25}, although it was the only \stlsat{} engine able to solve the \emph{Aerospace} benchmark within the time limit.

Executing \stlsat{}'s three solving engines in parallel brings a slight benefit in random benchmarks, because there are some formulas for which one particular engine performs better than the others.
In the NASA-Boeing set, instead, the FOL encoding terminates first on the large majority of instances (although by a small margin), and parallel execution offsets the overhead of running all engines.

\begin{table}
    \centering
    \caption{Results on STL benchmarks from~\cite{MelaniBC25}. Columns 2 to 4 contain, respectively, number of requirements, maximum time Horizon, and temporal operators.
    Column R gives the result, shared by all approaches ($\top =$ ``satisfiable'', $\bot =$ ``unsatisfiable'').
    The remaining columns give the solution time (s) for each approach, where TO denotes a timeout at 120\,s. Bold marks the fastest solver per row.}
    \label{tab:stl-paper-bench}
    \footnotesize
    \begin{tabular}{l r r l c R{1.3cm} R{1.3cm} R{1.3cm} R{1.3cm} R{1.3cm}}
        \toprule
        \textbf{Benchmark} & \textbf{\#} & \textbf{H} & \textbf{Op.} & \textbf{R} & \textbf{STLTree tableau} & \textbf{\stlsat{} SMT} & \textbf{\stlsat{} FOL} & \textbf{\stlsat{} basic t.} & \textbf{\stlsat{} \textsf{JUMP} t.} \\
        \midrule
        Car \cite{BaeLee19} & 4 & 100 & $\globop$,$\evenop$,$\untilop$ & $\top$ & 0.5328 & 0.0238 & 0.0212 & 0.0114 & \textbf{0.0050} \\
        Thermostat \cite{BaeLee19} & 4 & 40 & $\globop$,$\evenop$,$\untilop$,$\releaseop$ & $\top$ & 0.4930 & 0.0186 & 0.0310 & 0.0060 & \textbf{0.0044} \\
        Watertank \cite{BaeLee19} & 4 & 50 & $\globop$,$\evenop$ & $\top$ & 0.5044 & 0.0174 & 0.0194 & \textbf{0.0116} & 0.0146 \\
        Railroad \cite{BaeLee19} & 4 & 100 & $\globop$,$\evenop$ & $\bot$ & 5.4144 & \textbf{0.0186} & 0.0582 & 0.0538 & 0.0258 \\
        Battery \cite{BaeLee19} & 4 & 64 & $\globop$,$\evenop$,$\untilop$ & $\top$ & 0.5940 & 0.0240 & 0.0852 & 0.0158 & \textbf{0.0088} \\
        PCV \cite{pcv} & 9 & 1,250 & $\globop$,$\evenop$ & $\top$ & 2.2950 & 1.0698 & \textbf{0.0190} & 22.1564 & 25.0328 \\
        MTL \cite{Benchmarks_Temporal_Logic_Requirements} & 10 & 2,020 & $\globop$,$\evenop$ & $\bot$ & 1.4636 & 0.1914 & \textbf{0.0178} & 0.2798 & 0.2834 \\
        CPS \cite{BoufaiedJBBP21} & 37 & 13,799 & $\globop$,$\evenop$,$\untilop$ & $\bot$ & 3.1226 & TO & \textbf{0.0198} & TO & TO \\
        Aerospace \cite{MelaniBC25} & 31 & 1,010 & $\globop$,$\evenop$ & $\top$ & \textbf{8.7816} & 37.5802 & TO & TO & TO \\
        Irrigation $\Phi_w$, \S\ref{sec:intro} & 3 & 160 & $\globop$,$\evenop$ & $\bot$ & 0.6774 & \textbf{0.0306} & TO & 0.0698 & 0.0404 \\
        \bottomrule
    \end{tabular}
\end{table}

\noindent \textbf{STL.} Table~\ref{tab:stl-paper-bench} shows our comparison on benchmarks from \cite{MelaniBC25}, plus the running example $\Phi_w$.
\stlsat{} outperforms STLTree in some benchmarks, but times out in the largest ones;
on the other hand, the FOL encoding quickly solves CPS, and the SMT encoding solves Aerospace without timing out.

\begin{figure}
    \centering
    \begin{subfigure}[b]{0.3\textwidth}
        \centering
        \caption{Random suite}
        \label{fig:stl-survival-random}
        \includegraphics[width=\linewidth]{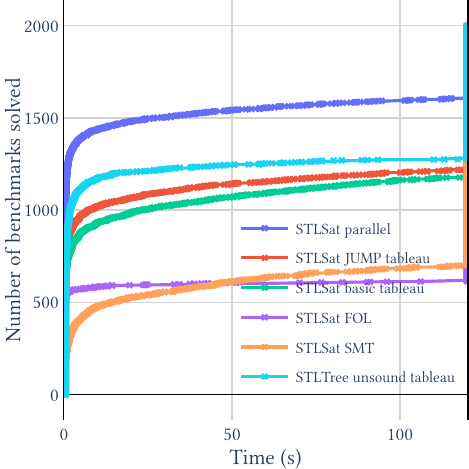}
    \end{subfigure}%
    \begin{subfigure}[b]{0.3\textwidth}
        \centering
        \caption{Random0 suite}
        \label{fig:stl-survival-random0}
        \includegraphics[width=\linewidth]{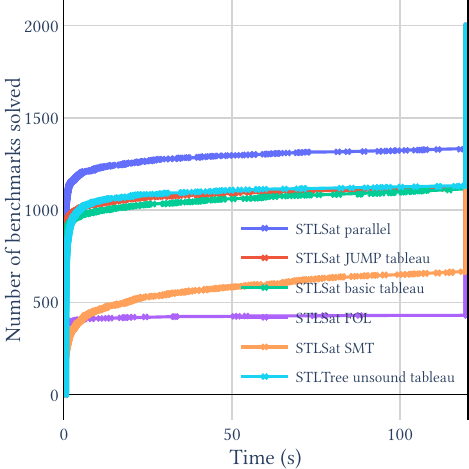}
    \end{subfigure}
    \caption{Survival plots for random STL benchmarks. Higher is better.}
    \label{fig:stl-plots}
\end{figure}

Fig.~\ref{fig:stl-plots} shows survival plots for randomly generated STL formulas.
\stlsat{}'s tableau performs slightly worse than \stltree{}, but much better than the FOL and SMT encodings. 
Table~\ref{tab:wins} suggests that the tableau is the fastest engine on most instances, while FOL and SMT are each fastest on smaller subsets of formulas. In particular, FOL wins on most of the NASA-Boeing benchmarks, and wins more frequently in unsatisfiable instances than satisfiable ones.
Consequently, parallel execution of the \stlsat{} engines yields further computational gains.

\paragraph{\textbf{Ablation Study}}

Our \textsf{JUMP} rule is a sound and complete alternative to the \oldjump{} rule~\cite{MelaniBC25}, although it is generally applicable in fewer cases.
To assess its practical impact, we compare the $\stlsat{}$ tableau  against the basic tableau with \textsf{JUMP} disabled (where only \textsf{STEP} rules execute) across all benchmarks from \S\ref{sec:benchmarks}.

Fig.~\ref{fig:plots-ablation} compares their solving times: most points lie below the bisector, showing that \textsf{JUMP} generally improves performance. 
Furthermore, instances where jumping provides no benefit from the optimization remain close to the bisector, indicating that computing and checking the jump conditions introduces negligible overhead.

\begin{figure}
    \centering
    \begin{subfigure}[b]{0.3\textwidth}
        \centering
        \caption{NASA-Boeing (MLTL)}
        \label{fig:mltl-ablation-nasa-boeing}
        \includegraphics[width=\linewidth]{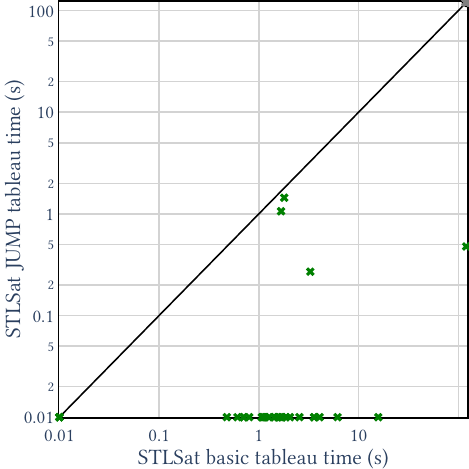}
    \end{subfigure}%
    \begin{subfigure}[b]{0.3\textwidth}
        \centering
        \caption{Random (MLTL)}
        \label{fig:mltl-ablation-random}
        \includegraphics[width=\linewidth]{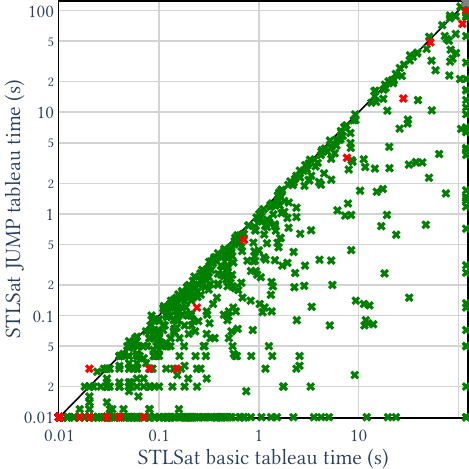}
    \end{subfigure}%
    \begin{subfigure}[b]{0.3\textwidth}
        \centering
        \caption{Random0 (MLTL)}
        \label{fig:mltl-ablation-random0}
        \includegraphics[width=\linewidth]{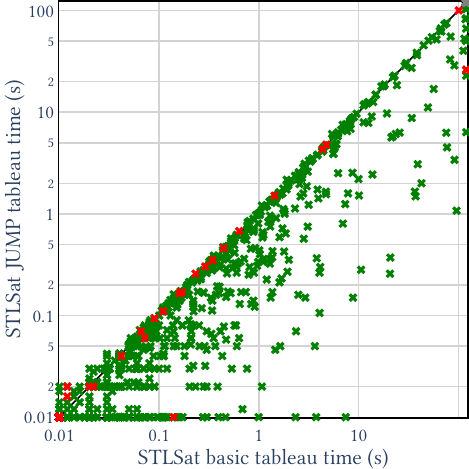}
    \end{subfigure}

    \begin{subfigure}[b]{0.3\textwidth}
        \centering
        \caption{Random (STL)}
        \label{fig:stl-ablation-random}
        \includegraphics[width=\linewidth]{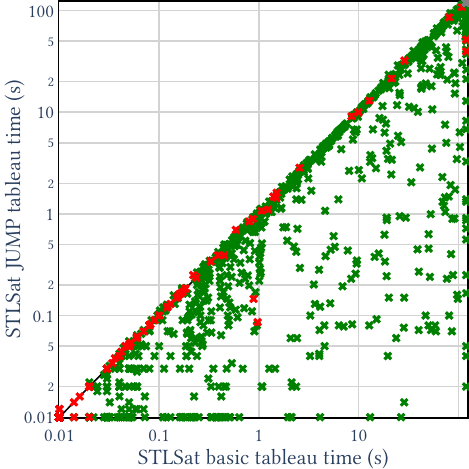}
    \end{subfigure}%
    \begin{subfigure}[b]{0.3\textwidth}
        \centering
        \caption{Random0 (STL)}
        \label{fig:stl-ablation-random0}
        \includegraphics[width=\linewidth]{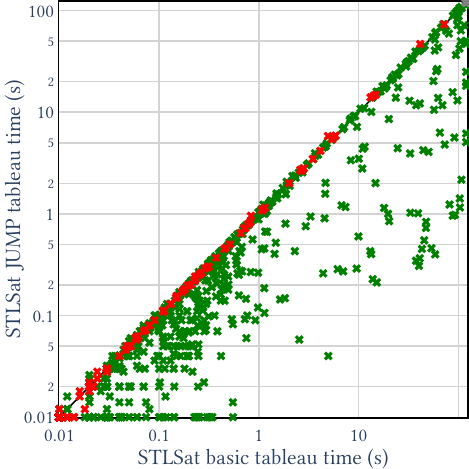}
    \end{subfigure}

    \caption{Scatter plots for the \stlsat{} ablation study on MLTL and STL benchmarks.}
    \label{fig:plots-ablation}
\end{figure}

\section{Conclusions}
\label{sec:conclusions}

We presented \stlsat{}, an open-source Rust tool for satisfiability checking of bounded discrete-time STL.
We identified an unsound optimization in the tableau of~\citet{MelaniBC25} and replaced it with a sound and complete rule.
\stlsat{} combines this tableau with two alternative encodings: a FOL translation and a quantifier-free SMT encoding, which can be executed in parallel as a portfolio solver.

Our extensive benchmark suite (covering both STL and MLTL, with real-world and randomly generated instances) shows that \stlsat{} performs comparably to \stltree{}, and often outperforms state-of-the-art SMT encodings such as MLTLSAT.  
Furthermore, running all \stlsat{} solving engines in parallel yields the best overall performance.
Moreover, we argued how unsat core extraction facilitates specification mining, redundancy elimination, and traceability in CPS development.

Future work will focus on temporal analysis for \emph{unsat cores}, support for unbounded STL operators and continuous-time semantics, tighter integration with model-checking, falsification, and specification mining, and the development of visual debugging interfaces that display counterexample signals and core constraints.

\section*{Acknowledgements}
Partially funded by the European Union.
Views and opinions expressed are those of the authors only and do not necessarily reflect those of the European Union or the European Health and Digital Executive Agency (HADEA).
Neither the European Union nor the granting authority can be held responsible for them. RobustifAI project, ID 101212818; VASSAL project, ID 101160022; MoSaiC project, ID 101103010.
Partially funded by WWTF project ICT22-023 (TAIGER).

\bibliographystyle{plainnat}
\bibliography{stlsat}

@inproceedings{MalerN04,
  author       = {Oded Maler and
                  Dejan Nickovic},
  _editor       = {Yassine Lakhnech and
                  Sergio Yovine},
  title        = {Monitoring Temporal Properties of Continuous Signals},
  _booktitle    = {Formal Techniques, Modelling and Analysis of Timed and Fault-Tolerant
                  Systems, Joint International Conferences on Formal Modelling and Analysis
                  of Timed Systems, {FORMATS} 2004 and Formal Techniques in Real-Time
                  and Fault-Tolerant Systems, {FTRTFT} 2004, Grenoble, France, September
                  22-24, 2004, Proceedings},
  booktitle    = {{FORMATS/FTRTFT}'04},
  series       = {LNCS},
  volume       = {3253},
  pages        = {152--166},
  publisher    = {Springer},
  year         = {2004},
  _url          = {https://doi.org/10.1007/978-3-540-30206-3\_12},
  doi          = {10.1007/978-3-540-30206-3\_12},
}

@inproceedings{BartocciMNY23,
  author       = {Ezio Bartocci and
                  Leonardo Mariani and
                  Dejan Nickovic and
                  Drishti Yadav},
  title        = {Property-Based Mutation Testing},
  _booktitle    = {Proc. of {ICST} 2023: the {IEEE} Conference on Software Testing, Verification and Validation},
  booktitle    = {{ICST} '23},
  pages        = {222--233},
  publisher    = {{IEEE}},
  year         = {2023},
  _url          = {https://doi.org/10.1109/ICST57152.2023.00029},
  doi          = {10.1109/ICST57152.2023.00029}
}

@inproceedings{BartocciFMN18,
  author       = {Ezio Bartocci and
                  Thomas Ferr{\`{e}}re and
                  Niveditha Manjunath and
                  Dejan Nickovic},
  _editor       = {Maria Prandini and
                  Jyotirmoy V. Deshmukh},
  title        = {Localizing Faults in {Simulink/Stateflow} Models with {STL}},
  _booktitle    = {Proc. of {HSCC} 2018: the 21st International Conference on Hybrid Systems: Computation and Control (part of {CPS} Week)},
  booktitle    = {{HSCC} '18},
  pages        = {197--206},
  publisher    = {{ACM}},
  year         = {2018},
  _url          = {https://doi.org/10.1145/3178126.3178131},
  doi          = {10.1145/3178126.3178131}
}

@inproceedings{SankaranarayananF12,
  author       = {Sriram Sankaranarayanan and
                  Georgios Fainekos},
  _editor       = {Thao Dang and
                  Ian M. Mitchell},
  title        = {Falsification of temporal properties of hybrid systems using the cross-entropy
                  method},
  _booktitle    = {Hybrid Systems: Computation and Control (part of {CPS} Week 2012),
                  HSCC'12, Beijing, China, April 17-19, 2012},
  booktitle    = {{HSCC}'12},
  pages        = {125--134},
  publisher    = {{ACM}},
  year         = {2012},
  doi          = {10.1145/2185632.2185653}
}

@article{BartocciBNS15,
  author       = {Ezio Bartocci and
                  Luca Bortolussi and
                  Laura Nenzi and
                  Guido Sanguinetti},
  title        = {System design of stochastic models using robustness of temporal properties},
  journal      = {Theor. Comput. Sci.},
  volume       = {587},
  pages        = {3--25},
  year         = {2015},
  doi          = {10.1016/J.TCS.2015.02.046}
}

@inproceedings{JaksicBGKNN15,
  author       = {Stefan Jaksic and
                  Ezio Bartocci and
                  Radu Grosu and
                  Reinhard Kloibhofer and
                  Thang Nguyen and
                  Dejan Nickovic},
  title        = {From signal temporal logic to {FPGA} monitors},
  booktitle    = {{MEMOCODE} '15},
  pages        = {218--227},
  publisher    = {{IEEE}},
  year         = {2015},
  doi          = {10.1109/MEMCOD.2015.7340489}
}

@inproceedings{MLTL,
    author = {Jianwen, Li and Moshe Y. Vardi and Kristin Y. Rozier},
    title = {Satisfiability Checking for Mission-Time LTL},
    editor = {Dillig, I. and Tasiran, S.},
    booktitle = {Computer Aided Verification. CAV 2019.},
    series   = {LNCS},
    volume = {11562},
    publisher    = {Springer},
    year = {2019},
    doi = {https://doi.org/10.1007/978-3-030-25543-5_1}    
}

@article{BaeLee19,
    author = {Kyungmin Bae and Jia Lee},
    title = {Bounded model checking of signal temporal logic properties using syntactic separation},
    journal = {Proc. ACM Program. Lang.},
    volume = {3},
    pages = {1-30},
    publisher = {ACM},
    series   = {POPL},
    year = {2019},
    doi = {10.1145/3290364}    
}

@inproceedings{LeeYB21,
  author       = {Jia Lee and
                  Geunyeol Yu and
                  Kyungmin Bae},
  title        = {Efficient {SMT}-Based Model Checking for Signal Temporal Logic},
  booktitle    = {{ASE}'21},
  pages        = {343--354},
  publisher    = {{IEEE}},
  year         = {2021},
  _url          = {https://doi.org/10.1109/ASE51524.2021.9678719},
  doi          = {10.1109/ASE51524.2021.9678719},
}

@inproceedings{Reynolds16,
    author = {Mark Reynolds},
    title = {A New Rule for {LTL} Tableaux},
    booktitle = {GandALF'16},
    series   = {EPCTS},
    volume = {26},
    pages = {287-301},
    year = {2016},
    doi = {10.4204/EPTCS.226.20}
}

@article{GeattiGMR21,
  author       = {Luca Geatti and
                  Nicola Gigante and
                  Angelo Montanari and
                  Mark Reynolds},
  title        = {One-pass and tree-shaped tableau systems for {TPTL} and {TPTL}\({}_{\mbox{b}}\)+Past},
  journal      = {Inf. Comput.},
  volume       = {278},
  pages        = {104599},
  year         = {2021},
  _url          = {https://doi.org/10.1016/j.ic.2020.104599},
  doi          = {10.1016/J.IC.2020.104599},
}

@article{GeattiGMV24,
  author       = {Luca Geatti and
                  Nicola Gigante and
                  Angelo Montanari and
                  Gabriele Venturato},
  title        = {{SAT} Meets Tableaux for Linear Temporal Logic Satisfiability},
  journal      = {J. Autom. Reason.},
  volume       = {68},
  number       = {2},
  pages        = {6},
  year         = {2024},
  doi          = {10.1007/S10817-023-09691-1},
}

@inproceedings{vacuity,
  author={Dokhanchi, Adel and Yaghoubi, Shakiba and Hoxha, Bardh and Fainekos, Georgios},
  booktitle={2017 13th IEEE Conference on Automation Science and Engineering (CASE)}, 
  title={Vacuity aware falsification for MTL request-response specifications}, 
  year={2017},
  volume={},
  number={},
  pages={1332-1337},
  doi={10.1109/COASE.2017.8256286}
}

@inproceedings{RoehmHM17,
  author       = {Hendrik Roehm and
                  Thomas Heinz and
                  Eva Charlotte Mayer},
  _editor       = {Rupak Majumdar and
                  Viktor Kuncak},
  title        = {STLInspector: {STL} Validation with Guarantees},
  booktitle    = {{CAV}'17},
  series       = {LNCS},
  volume       = {10426},
  pages        = {225--232},
  publisher    = {Springer},
  year         = {2017},
  doi          = {10.1007/978-3-319-63387-9\_11}
}

@inproceedings{BersaniRP13,
  author       = {Marcello M. Bersani and
                  Matteo Rossi and
                  Pierluigi {San Pietro}},
  _editor       = {C{\'{e}}sar S{\'{a}}nchez and
                  Kristen Brent Venable and
                  Esteban Zim{\'{a}}nyi},
  title        = {A Tool for Deciding the Satisfiability of Continuous-Time Metric Temporal
                  Logic},
  booktitle    = {TIME'13},
  pages        = {99--106},
  publisher    = {{IEEE} Computer Society},
  year         = {2013},
  _url          = {https://doi.org/10.1109/TIME.2013.20},
  doi          = {10.1109/TIME.2013.20},
}

@article{BartocciMMMN21,
  author       = {Ezio Bartocci and
                  Niveditha Manjunath and
                  Leonardo Mariani and
                  Cristinel Mateis and
                  Dejan Nickovic},
  title        = {CPSDebug: Automatic failure explanation in {CPS} models},
  journal      = {Int. J. Softw. Tools Technol. Transf.},
  volume       = {23},
  number       = {5},
  pages        = {783--796},
  year         = {2021},
  doi          = {10.1007/S10009-020-00599-4}
}

@article{BoufaiedJBBP21,
  author       = {Chaima Boufaied and
                  Maris Jukss and
                  Domenico Bianculli and
                  Lionel C. Briand and
                  Yago Isasi Parache},
  title        = {Signal-Based Properties of Cyber-Physical Systems: Taxonomy and Logic-based
                  Characterization},
  journal      = {J. Syst. Softw.},
  volume       = {174},
  pages        = {110881},
  year         = {2021},
  doi          = {10.1016/J.JSS.2020.110881}
}

@incollection{BartocciDDFMNS18,
  author       = {Ezio Bartocci and
                  Jyotirmoy V. Deshmukh and
                  Alexandre Donz{\'{e}} and
                  Georgios Fainekos and
                  Oded Maler and
                  Dejan Nickovic and
                  Sriram Sankaranarayanan},
  _editor       = {Ezio Bartocci and
                  Yli{\`{e}}s Falcone},
  title        = {Specification-Based Monitoring of Cyber-Physical Systems: {A} Survey
                  on Theory, Tools and Applications},
  booktitle    = {Lectures on Runtime Verification - Introductory and Advanced Topics},
  series       = {LNCS},
  volume       = {10457},
  pages        = {135--175},
  publisher    = {Springer},
  year         = {2018},
  _url          = {https://doi.org/10.1007/978-3-319-75632-5\_5},
  doi          = {10.1007/978-3-319-75632-5\_5}
}

@article{BartocciMNN22,
  author       = {Ezio Bartocci and
                  Cristinel Mateis and
                  Eleonora Nesterini and
                  Dejan Nickovic},
  title        = {Survey on mining signal temporal logic specifications},
  journal      = {Inf. Comput.},
  volume       = {289},
  number       = {Part},
  pages        = {104957},
  year         = {2022},
  _url          = {https://doi.org/10.1016/j.ic.2022.104957},
  doi          = {10.1016/J.IC.2022.104957}
}

@inproceedings{Z3,
    author = {de Moura, Leonardo and Bjørner, Nicolaj},
    title = {Z3: An Efficient {SMT} Solver},
    _editor = {Ramakrishnan, C.R. and Rehof, J.},
    booktitle = {TACAS'08.},
    series   = {LNCS},
    volume = {4963},
    publisher    = {Springer},
    year = {2008},
    doi = {https://doi.org/10.1007/978-3-540-78800-3_24}    
}

@inproceedings{Benchmarks_Temporal_Logic_Requirements,
  author    = {Bardh Hoxha and Houssam Abbas and Georgios Fainekos},
  title     = {Benchmarks for Temporal Logic Requirements for Automotive Systems},
  booktitle = {ARCH14-15},
  _editor    = {Goran Frehse and Matthias Althoff},
  series    = {EPiC Series in Computing},
  volume    = {34},
  publisher = {EasyChair},
  doi       = {10.29007/xwrs},
  pages     = {25-30},
  year      = {2015}}

@inproceedings{pcv,
author = {Jin, Xiaoqing and Deshmukh, Jyotirmoy V. and Kapinski, James and Ueda, Koichi and Butts, Ken},
title = {Powertrain control verification benchmark},
year = {2014},
isbn = {9781450327329},
publisher = {ACM},
address = {New York, USA},
doi = {10.1145/2562059.2562140},
booktitle = {{HSCC}'14},
pages = {253–262},
numpages = {10},
}

@article{LiVR22,
  author       = {Jianwen Li and
                  Moshe Y. Vardi and
                  Kristin Y. Rozier},
  title        = {Satisfiability checking for Mission-time {LTL} {(MLTL)}},
  journal      = {Inf. Comput.},
  volume       = {289},
  number       = {Part},
  pages        = {104923},
  year         = {2022},
  doi          = {10.1016/J.IC.2022.104923}
}

@article{AlurFH96,
  author       = {Rajeev Alur and
                  Tom{\'{a}}s Feder and
                  Thomas A. Henzinger},
  title        = {The Benefits of Relaxing Punctuality},
  journal      = {J. {ACM}},
  volume       = {43},
  number       = {1},
  pages        = {116--146},
  year         = {1996},
  doi          = {10.1145/227595.227602}
}

@inproceedings{RamanDMMSS14,
  author       = {Vasumathi Raman and
                  Alexandre Donz{\'{e}} and
                  Mehdi Maasoumy and
                  Richard M. Murray and
                  Alberto L. Sangiovanni{-}Vincentelli and
                  Sanjit A. Seshia},
  title        = {Model predictive control with signal temporal logic specifications},
  booktitle    = {{CDC}'14},
  pages        = {81--87},
  publisher    = {{IEEE}},
  year         = {2014},
  doi          = {10.1109/CDC.2014.7039363}
}

@article{BersaniRP15,
  author       = {Marcello M. Bersani and
                  Matteo Rossi and
                  Pierluigi {San Pietro}},
  title        = {An {SMT}-based approach to satisfiability checking of {MITL}},
  journal      = {Inf. Comput.},
  volume       = {245},
  pages        = {72--97},
  year         = {2015},
  doi          = {10.1016/J.IC.2015.06.007}
}

@article{BersaniRP16,
  author       = {Marcello M. Bersani and
                  Matteo Rossi and
                  Pierluigi {San Pietro}},
  title        = {A tool for deciding the satisfiability of continuous-time metric temporal
                  logic},
  journal      = {Acta Informatica},
  volume       = {53},
  number       = {2},
  pages        = {171--206},
  year         = {2016},
  doi          = {10.1007/S00236-015-0229-Y}
}

@article{HirshfeldR05,
  author       = {Yoram Hirshfeld and
                  Alexander Moshe Rabinovich},
  title        = {Timer formulas and decidable metric temporal logic},
  journal      = {Inf. Comput.},
  volume       = {198},
  number       = {2},
  pages        = {148--178},
  year         = {2005},
  doi          = {10.1016/J.IC.2004.12.002}
}

@inproceedings{ReinbacherRS14,
  author       = {Thomas Reinbacher and
                  Kristin Y. Rozier and
                  Johann Schumann},
  _editor       = {Erika {\'{A}}brah{\'{a}}m and
                  Klaus Havelund},
  title        = {Temporal-Logic Based Runtime Observer Pairs for System Health Management of Real-Time Systems},
  booktitle    = {{TACAS}'14},
  series       = {LNCS},
  volume       = {8413},
  pages        = {357--372},
  publisher    = {Springer},
  year         = {2014},
  doi          = {10.1007/978-3-642-54862-8\_24}
}

@misc{mltlsat,
  author = {Jianwen Li},
  title = {mltlsat},
  howpublished = {GitHub.com},
  url = {https://github.com/lijwen2748/mltlsat},
  year = {2019}
}

@article{Dokhanchi,
author = {Dokhanchi, Adel and Hoxha, Bardh and Fainekos, Georgios},
title = {Formal Requirement Debugging for Testing and Verification of Cyber-Physical Systems},
year = {2017},
issue_date = {March 2018},
publisher = {Association for Computing Machinery},
address = {New York, NY, USA},
volume = {17},
number = {2},
issn = {1539-9087},
_url = {https://doi.org/10.1145/3147451},
doi = {10.1145/3147451},
journal = {ACM Trans. Embed. Comput. Syst.},
articleno = {34},
numpages = {26}
}

@misc{stltree,
  author = {Beatrice Melani and Ezio Bartocci and Michele Chiari},
  title = {STLTree},
  url = {https://github.com/beamelani/Consistency_Check},
  year = 2025
}

@article{MelaniBC25,
  title={A Tree-Shaped Tableau for Checking the Satisfiability of Signal Temporal Logic with Bounded Temporal Operators},
  author={Melani, Beatrice and Bartocci, Ezio and Chiari, Michele},
  journal={{ACM} Trans. Embed. Comput. Syst.},
  volume={24},
  number={5s},
  pages={1--26},
  year={2025},
  publisher={ACM New York, NY},
  doi={10.1145/3759917}
}

@inproceedings{CimattiGS07,
  author       = {Alessandro Cimatti and
                  Alberto Griggio and
                  Roberto Sebastiani},
  _editor       = {Jo{\~{a}}o Marques{-}Silva and
                  Karem A. Sakallah},
  title        = {A Simple and Flexible Way of Computing Small Unsatisfiable Cores in
                  {SAT} Modulo Theories},
  _booktitle    = {Theory and Applications of Satisfiability Testing - {SAT} 2007, 10th
                  International Conference, Lisbon, Portugal, May 28-31, 2007, Proceedings},
  booktitle    = {{SAT}'07},
  series       = {LNCS},
  volume       = {4501},
  pages        = {334--339},
  publisher    = {Springer},
  year         = {2007},
  _url          = {https://doi.org/10.1007/978-3-540-72788-0\_32},
  doi          = {10.1007/978-3-540-72788-0\_32},
}

@article{Schuppan16,
  author       = {Viktor Schuppan},
  title        = {Extracting unsatisfiable cores for {LTL} via temporal resolution},
  journal      = {Acta Informatica},
  volume       = {53},
  number       = {3},
  pages        = {247--299},
  year         = {2016},
  _url          = {https://doi.org/10.1007/s00236-015-0242-1},
  doi          = {10.1007/S00236-015-0242-1},
}

@book{cormen2022introduction,
  title={Introduction to algorithms},
  author={Cormen, Thomas H and Leiserson, Charles E and Rivest, Ronald L and Stein, Clifford},
  year={2022},
  publisher={MIT press}
}

@inproceedings{armando2004sat,
  author       = {Alessandro Armando and
                  Claudio Castellini and
                  Enrico Giunchiglia and
                  Marco Maratea},
  title        = {A {SAT}-based Decision Procedure for the {Boolean} Combination of Difference
                  Constraints},
  booktitle    = {{SAT} 2004},
  year         = {2004},
  url          = {http://www.satisfiability.org/SAT04/programme/71.pdf},
}

@inproceedings{DutertreM06,
  author       = {Bruno Dutertre and
                  Leonardo M. de Moura},
  _editor       = {Thomas Ball and
                  Robert B. Jones},
  title        = {A Fast Linear-Arithmetic Solver for {DPLL(T)}},
  booktitle    = {{CAV} 2006},
  series       = {LNCS},
  volume       = {4144},
  pages        = {81--94},
  publisher    = {Springer},
  year         = {2006},
  _url          = {https://doi.org/10.1007/11817963\_11},
  doi          = {10.1007/11817963\_11},
}

@article{ClarkeBRZ01,
  author       = {Edmund M. Clarke and
                  Armin Biere and
                  Richard Raimi and
                  Yunshan Zhu},
  title        = {Bounded Model Checking Using Satisfiability Solving},
  journal      = {Formal Methods Syst. Des.},
  volume       = {19},
  number       = {1},
  pages        = {7--34},
  year         = {2001},
  _url          = {https://doi.org/10.1023/A:1011276507260},
  doi          = {10.1023/A:1011276507260},
  timestamp    = {Fri, 13 Mar 2020 10:55:19 +0100},
  biburl       = {https://dblp.org/rec/journals/fmsd/ClarkeBRZ01.bib},
  bibsource    = {dblp computer science bibliography, https://dblp.org}
}

@inproceedings{deMouraU21,
  title = {The {Lean} 4 Theorem Prover and Programming Language},
  author = {de Moura, Leonardo and Ullrich, Sebastian},
  year = {2021},
  _isbn = {978-3-030-79875-8},
  publisher = {Springer-Verlag},
  address = {Berlin, Heidelberg},
  series = {LNCS},
  _url = {https://doi.org/10.1007/978-3-030-79876-5_37},
  doi = {10.1007/978-3-030-79876-5_37},
  booktitle = {CADE'21},
  _booktitle = {Automated Deduction – CADE 28: 28th International Conference on Automated Deduction, Virtual Event, July 12–15, 2021, Proceedings},
  pages = {625–635},
  numpages = {11}
}

@misc{openai2026gpt56,
  author       = {{OpenAI}},
  title        = {{GPT-5.6}: Frontier intelligence that scales with your ambition},
  year         = {2026},
  howpublished = {\url{https://openai.com/index/gpt-5-6/}},
  note         = {Accessed: 2026-08-28}
}

@misc{zenodo,
  author       = {Zamponi, Marco and
                  Chiari, Michele and
                  Bartocci, Ezio},
  title        = {Replication package for the paper ``{STLSat}---An
                   Improved Tableau for Satisfiability Checking of
                   {Signal Temporal Logic} Formulas''
                  },
  year         = 2026,
  publisher    = {Zenodo},
  doi          = {10.5281/zenodo.22746464},
  url          = {https://doi.org/10.5281/zenodo.22746464},
}

@misc{stlsat,
  author = {Zamponi, Marco and
            Chiari, Michele and
            Lammel, Florian},
  title = {{STLSat}},
  howpublished = {GitHub.com},
  url = {https://github.com/ZamponiMarco/stlsat},
  year = {2026}
}

@article{ZamponiLBC26,
  author       = {Marco Zamponi and
                  Florian Lammel and
                  Ezio Bartocci and
                  Michele Chiari},
  title        = {{STLSat}---{An} Improved Tableau for Satisfiability Checking of {Signal Temporal Logic} Formulas},
  journal      = {CoRR},
  volume       = {abs/2607.21081},
  year         = {2026},
  _url          = {https://doi.org/10.48550/arXiv.2607.21081},
  doi          = {10.48550/ARXIV.2607.21081},
  eprinttype   = {arXiv},
  eprint       = {2607.21081},
}

\ifthenelse{\boolean{appendix}}{%
  \appendix
  \section{Flaws in the \texorpdfstring{\oldjump{}}{JUMPO} rule}
\label{app:jump-flaws}

In this section, we show that the \oldjump{} rule proposed by \citet{MelaniBC25} undermines both soundness and completeness of the tableau.
We report the definition of the \oldjump{} rule almost verbatim from \cite{MelaniBC25} below.

\citet{MelaniBC25} manage formula superscripts in a slightly different way:
nested operators extracted by expansion rules are annotated with the interval of the parent formula, rather than the formula itself:
\begin{align*}
&\exp^t_I(\anybinop{[a,b]}{\varphi_1}{\varphi_2}) = \anybinop[I]{[a+t,b+t]}{\varphi_1}{\varphi_2} && \text{with } \mathsf{B} \in \set{\suntilop, \sreleaseop}
\end{align*}
where $I$ is the interval $[a,b]$ in Table~\ref{tab:expansion-rules},
and $\exp^t_I$ replaces $\exp^t_\psi$ in that table.
Note that, since such superscripts are not used in the basic tableau, this change does not interfere with its correctness.

The \oldjump{} rule can be applied to a poised node $u$ if 
all marked operators
$\msuntil[J]{[a,b]}{\varphi_1}{\varphi_2}$ or $\msrelease[J]{[a,b]}{\varphi_2}{\varphi_1}$ in $\Gamma(u)$,
with $J = [a_J, b_J]$ and $t(u) \geq b_J$, are such that
\begin{enumerate}[label=\roman*.]
\item $t(u) < b$, and
\item \label{item:oldjump-pcl-rule}
    for all temporal operators $\psi$ in $\varphi_1$ that are not nested in any other temporal operator we have $t(u) \geq a + I_u(\psi)$,
\end{enumerate}
where $I_u(\psi)$ is the upper endpoint of the interval of \(\psi\)
(e.g., $I_u(\glob{[a,b]}\psi') = b$).
Otherwise, only the \textsf{STEP} rule can be applied.

If the conditions above are met, then $u$ has one child $u'$ such that $t(u') = \min \set{t \in K(u) \mid t > t(u)}$,
where
\[
K(u) = \set{a, b \mid \anybinop[J]{[a,b]}{\varphi_1}{\varphi_2} \in \Gamma(u) \land \mathsf{B} \in \set{\suntilop, \sreleaseop, \marked{\suntilop}, \marked{\sreleaseop}} \land t(u) \geq b_J},
\]
and, setting $k = t(u') - t(u)$, we have
\begin{align*}
\Gamma(u') =
&\ \set{\anybinop[J]{I}{\varphi_1}{\varphi_2} \in \Gamma(u) \mid \mathsf{B} \in \set{\suntilop, \sreleaseop} \land t(u) \geq b_J} \\ 
&\cup
\set{\anybinop[J]{[a+k, b+k]}{\varphi_1}{\varphi_2} \mid \anybinop[J]{[a,b]}{\varphi_1}{\varphi_2} \in \Gamma(u) \land \mathsf{B} \in \set{\suntilop, \sreleaseop} \land t(u) < b_J} \\ 
&\cup
\set{\anybinop[J]{[a,b]}{\varphi_1}{\varphi_2} \mid \stlbinop[J]{\marked{\mathsf{B}}}{[a,b]}{\varphi_1}{\varphi_2} \in \Gamma(u) \land \mathsf{B} \in \set{\suntilop, \sreleaseop} \land t(u) \geq b_J \land t(u) < b} \\ 
&\cup
\set{\anybinop[J]{[a+k,b+k]}{\varphi_1}{\varphi_2} \mid \stlbinop[J]{\marked{\mathsf{B}}}{[a,b]}{\varphi_1}{\varphi_2} \in \Gamma(u) \land \mathsf{B} \in \set{\suntilop, \sreleaseop} \land t(u) < b_J \land t(u) < b} 
\end{align*}

The \oldjump{} rule has two main issues:
\begin{enumerate}
\item condition \ref{item:oldjump-pcl-rule} only takes into account one level of nesting of temporal operators, and only in the right-hand-side operand; and
\item the set $\Gamma(u')$ does not faithfully re-create the label that node $u'$ would have if it had been reached by repeatedly applying the \textsf{STEP} rule until reaching a node with $t = t(u) + k$, dropping formulas that can reveal inconsistencies.
\end{enumerate}

As shown in Section~\ref{sec:unsound-jump}, formula $\varphi_S$ witnesses the unsoundness of the \oldjump{}  rule mainly because of the second issue above.
The following formula illustrates the first issue:
\[
\varphi'_S =
\suntil{[0,4]}{(\glob{[0,0]} \glob{[5,5]} a)}{(\glob{[5,5]} \neg c)} \land \glob{[5,8]} c \land \glob{[8,8]} \neg a
\]

Fig.~\ref{fig:unsound-tree} shows part of the tableau rooted in $\varphi'_S$ that uses the \oldjump{} rule.
First, the tableau tries to satisfy the $\untilop$ operator at time 0 by expanding it to $\glob{[5,5]} \neg c$ in node $u_2$, but this leads to a contradiction with $\glob{[5,8]} c$.
It thus tries extracting $\psi = \glob{[0,0]} \glob{[5,5]} a$ in node $u_3$, and since here $I_u(\psi) = 0$, the conditions for the \oldjump{} rule apply.
Thus, the tableau jumps to $u_8$ which, as noted in Section~\ref{sec:unsound-jump}, does not contain $\glob{[8,8]} \neg c$, allowing the satisfaction of all operators.

The new \textsf{JUMP} rule that we present in Section~\ref{sec:jump} does not suffer from the same issue thanks to the \emph{soundness limit} (cf.\ Section~\ref{sec:jump-size-calculation}).
In fact, due to the $\untilop$ operator we have $(0,5) \in N(u)$ and due to $\glob{[8,8]} \neg a$ we have $(8,8) \in O(u)$.
Thus, we have $k^{\ast}_{\text{sound}}(u) \leq l_o - r_n = 8 - 5 = 3$:
the tableau may jump at most to a node $u'$ with $t(u') = 3$,
in which either we try to satisfy the $\untilop$ operator, encountering the contradiction between $\glob{[8,8]} \neg c$ and $\glob{[5,8]} c$,
or we postpone its satisfaction, revealing the contradiction between $\glob{[3,3]} \glob{[5,5]} a$ and $\glob{[8,8]} \neg a$.

\begin{figure}
\centering
\small
\begin{forest}
  for tree={
    myleaf/.style={label=below:{\strut#1}},
    l sep=8mm,
    s sep=5mm
  },
  [{$u_0:\ \until{[0,4]}{(\glob{[0,0]}\glob{[5,5]}a)}{(\glob{[5,5]}\neg c)}\land \glob{[5,8]}c\land \glob{[8,8]}\neg a\mid \mathbf{0}$}
    [{$u_1:\ \until{[0,4]}{(\glob{[0,0]}\glob{[5,5]}a)}{(\glob{[5,5]}\neg c)},\ \glob{[5,8]}c,\ \glob{[8,8]}\neg a\mid \mathbf{0}$}
      [{$u_2:\ \glob{[5,5]}\neg c,\ \glob{[5,8]}c,\ \glob{[8,8]}\neg a\mid \mathbf{0}$},
        edge label={node[midway,left,xshift=50pt,font=\scriptsize\sffamily]{U}}
        [{$u_4:\ \glob{[8,8]}\neg a,\ \glob{[5,5]}\neg c,\ \glob{[5,8]}c\mid \mathbf{5}$},
          edge label={node[midway,left,font=\scriptsize\sffamily]{JUMP}}
          [{$u_5:\ \glob{[8,8]}\neg a,\ \mglob{[5,8]}c,\ \neg c,\ c\mid \mathbf{5}$},
          edge label={node[midway,left,font=\scriptsize\sffamily]{G}},
          myleaf={\xmark\ \textsf{LOC-UNSAT}}
          ]
        ]
      ]
      [{$u_3:\ \msuntil{[0,4]}{(\glob{[0,0]}\glob{[5,5]}a)}{(\glob{[5,5]}\neg c)},\ \glob{[5,8]}c,\ \glob{[8,8]}\neg a,\ \glob[{[0,4]}]{[0,0]}\glob{[5,5]}a\mid \mathbf{0}$},
        [{$u_7:\ \msuntil{[0,4]}{(\glob{[0,0]}\glob{[5,5]}a)}{(\glob{[5,5]}\neg c)},\ \glob{[5,8]}c,\ \glob{[8,8]}\neg a,\ \glob[{[0,0]}]{[5,5]}a\mid \mathbf{0}$},
          edge label={node[midway,left,font=\scriptsize\sffamily]{G}},
          [{$u_8:\ \until{[4,4]}{(\glob{[0,0]}\glob{[5,5]}a)}{(\glob{[5,5]}\neg c)},\ \glob{[8,8]}\neg a,\ \glob[{[0,0]}]{[5,5]}a,\ \glob{[5,8]}c \mid \mathbf{4}$},
            edge label={node[midway,left,font=\scriptsize\sffamily]{JUMP}}
            [{$u_9:\ \glob{[8,8]}\neg a,\ \glob[{[0,0]}]{[5,5]}a,\ \glob{[5,8]}c,\ \glob{[9,9]}\neg c\mid \mathbf{4}$},
              edge label={node[midway,left,xshift=30pt,font=\scriptsize\sffamily]{U}}
              [{$\dots$}, 
                edge label={node[midway,left,font=\scriptsize\sffamily]{STEP}},
                            [{$u_{18}:\ \neg c\mid \mathbf{9}$}, myleaf={\cmark\ \textsf{EMPTY}}]
              ]
            ]
            [{$\dots$} 
            ]
          ]
        ]
      ]
    ]
  ]
\end{forest}
\caption{Tableau with the \oldjump{} rule for formula $\varphi'_S$.}
\label{fig:unsound-tree}
\end{figure}

The \oldjump{} rule also undermines the completeness of the tableau.
Consider the following formula:
\[
\varphi_C =
\suntil{[0, 10]}{a}{(b \land \glob{[20, 30]} c)} \land \glob{[0, 27]} \neg c \land \glob{[10, 10]} \neg b
\]
This formula is satisfiable, as shown by an accepting branch of the basic tableau in Fig.~\ref{fig:incomplete-tree}.
In satisfying signals, the $\suntilop$ operator must have its right-hand operand satisfied at times 8 or 9.
In fact, formula $\glob{[20, 30]} c$ conflicts with $\glob{[0, 27]} \neg c$ if it is extracted at a time in between 0 and 7, while $b$ conflicts with $\glob{[10, 10]} \neg b$ at time 10.

Thus, the tableau branch that tries to satisfy the $\suntilop$ operator at time 0 is rejected, while the one that postpones its satisfaction remains.
The \oldjump{} rule then jumps directly to time 10, where the $\suntilop$ operator must be satisfied, but it cannot: by skipping nodes with times 8 and 9, the \oldjump{} rule prevents the tableau from finding two possible accepting branches.
Thus, the \oldjump{} rule makes the tableau incomplete, as it may declare unsatisfiable formulas that are actually satisfiable.

By contrast, with our \textsf{JUMP} rule, in each node $u$ with $\suntil{[0, 10]}{a}{(b \land \glob{[20, 30]} c)} \in \Gamma(u)$ we have $(t(u)+20, t(u)+30) \in M(u)$,
and $(0,27) \in S(u)$, due to $\glob{[0, 27]} \neg c$.
Since for all $t \in [0,7]$ we have $(t+20, t+30) \cap (0,27) \neq \emptyset$,
the \textsf{JUMP} rule cannot be applied until a node with time 8 is reached.
In such a node the expansion rule for the $\suntilop$ operator is applied, generating an accepting branch.

\begin{figure}
\centering
\begin{forest}
  for tree={
    myleaf/.style={label=below:{\strut#1}},
    l sep=4mm, s sep=4mm
  },
  [{$u_{0}:\ \suntil{[0,10]}{a}{(b \land \glob{[20,30]} c)} \land \glob{[0,27]} \neg c \land \glob{[10,10]} \neg b\ \mid\ \mathbf{0}$}
    [{$\dots$},
      [{$u_{1}:\ \glob{[10,10]} \neg b,\ \glob{[8,27]} \neg c,\ \suntil{[8,10]}{a}{b \land \glob{[20,30]} c}\ \mid\ \mathbf{8}$},
        edge label={node[midway,left,font=\scriptsize\sffamily]{{STEP}}}
        [{$u_{2}:\ \glob{[10,10]} \neg b,\ \mglob{[8,27]} \neg c,\ \suntil{[8,10]}{a}{b \land \glob{[20,30]} c},\ \neg c\ \mid\ \mathbf{8}$}
          [{$u_{3}:\ \glob{[10,10]} \neg b,\ \mglob{[8,27]} \neg c,\ b \land \glob{[28,38]} c,\ \neg c\ \mid\ \mathbf{8}$}
            [{$\dots$},
              [{$u_{4}:\ \glob{[38,38]} c\ \mid\ \mathbf{38}$},
                edge label={node[midway,left,font=\scriptsize\sffamily]{{STEP}}}
                [{$u_{5}:\ c\ \mid\ \mathbf{38}$}, myleaf={\cmark\ \textsf{EMPTY}}]
              ]
            ]
          ]
          [{$\dots$}]
        ]
      ]
    ]
  ]
\end{forest}
\caption{Part of an accepting branch of the basic tableau for formula $\varphi_C$.}
\label{fig:incomplete-tree}
\end{figure}

\section{Theorem Proofs}
\label{app:proofs}

\begin{proof}[Proof of Lemma~\ref{lemma:te-limit}]
    If $\rho$ does not appear within a temporal operator, then it may appear only in the labels of descendants $v$ of $u$ generated by the expansion rules for the $\land$ and $\lor$ operators.
    Such rules generate children nodes with the same time as the parent, hence $t(v) = t(u)$.
    Moreover, the \textsf{STEP} rule drops all non-temporal operators and propositions, hence none of the nodes it produces may contain $\rho$.

    We prove the case in which $\rho$ appears within a temporal operator only for the $t(v) \leq r$ side, the other one ($l \leq t(v)$) being analogous.
    The computation of $\tinv(\varphi)$ is recursive on the syntactic structure of $\varphi$, and $(l,r) \in \tinv(\varphi)$ results from a sequence of nested formulas terminating in $\rho$.
    We denote this sequence as $\pi = \langle \theta_n, \theta_{n-1}, \ldots, \theta_0 \rangle$, where:
    \begin{itemize}[nosep]
        \item $\theta_n = \varphi$;
        \item $\theta_{i}$ is the subformula of $\theta_{i+1}$ selected by the recursive application of $\tinv(\varphi)$ that yields $(l,r)$;
        \item $\theta_0 = \rho$.
    \end{itemize}
    For each $\theta_i$ in the path, let $(l_i, r_i) \in \tinv(\theta_i)$ denote the element contributed by $\theta_i$ to $(l, r)$ along the recursive computation, so that $(l_n, r_n) = (l, r)$ and $(l_0, r_0) = (0, 0)$.
    Furthermore, let $m$ be the index of the last temporal operator in the sequence:
    $\theta_m = \anybinop{I}{\theta_{m,1}}{\theta_{m,2}}$, with $\mathsf{B} \in \set{\suntilop, \sreleaseop}$,
    and for $m < i \leq n$, we have $\theta_i = \theta_{i,1} \circ \theta_{i,2}$ with $\circ \in \set{\land, \lor}$%
    \footnote{We assume $\varphi$ to be in strict normal form, so temporal operators cannot be nested in a negation.}.

    We now prove the claim that for each formula $\theta_i$ with $0 \leq i < m$, and any node $u_i$ such that $\exp^{t(u_i)}(\theta_i) \in \Gamma(u_i)$, and any node $u_i'$ in the subtree rooted at $u_i$, if we have $\rho \in \Gamma(u_i')$ that is an occurrence derived from the expansion of $\theta_i$ along $\pi$, then $t(u_i') \leq t(u_i) + r_i$.
    We proceed by induction on $i$, with cases on the shape of $\theta_i$.

    For the base case we have $\theta_0 = \rho$, and $\tinv(\theta_0) = \{(0,0)\}$, so $r_0 = 0$.
    Since no expansion rules can be applied to $\rho$, and the \textsf{STEP} rule does not propagate $\rho$ to its child, $\rho$ may only appear in descendants of $u_0$ that do not result from an application of \textsf{STEP}, thus only in $u_0$ itself or nodes $u'_0$ such that $t(u_0') = t(u_0)$.
    Hence, $t(u_0') = t(u_0) \leq t(u_0) + r_0$, and the claim holds.

    The inductive step is by cases on the shape of $\theta_i$:
    \begin{itemize}
        \item If $\theta_i = \neg \theta'_{i-1}$, then $\tinv(\theta_i) = \tinv(\theta'_{i-1})$, so $r_i = r_{i-1}$, and the claim follows by the induction hypothesis.
        \item Let $\theta_i = \theta_{i,1} \lor \theta_{i,2}$ and, w.l.o.g., $\theta_{i-1} = \theta_{i,1}$ (the other case is analogous).
              Since $\tinv(\theta_i) = \tinv(\theta_{i,1}) \cup \tinv(\theta_{i,2})$, we have $r_i = r_{i-1}$.
              According to Table~\ref{tab:expansion-rules}, the root $u_i$ of the subtree in which $\theta_i$ appears has two children $u_{i,1} = u_{i-1}$ and $u_{i,2}$ such that $\exp^{t(u_i)}(\theta_{i,1}) \in \Gamma(u_{i,1})$ and $\theta_i \not\in \Gamma(u_{i,1})$; $\exp^{t(u_i)}(\theta_{i,2}) \in \Gamma(u_{i,2})$ and $\theta_i \not\in \Gamma(u_{i,2})$; and $t(u_{i-1}) = t(u_{i,1}) = t(u_{i,2}) = t(u_i)$.
              Node $u'_i$ may only appear in the subtree rooted in $u_{i,1}$.
              Hence, by the induction hypothesis, $t(u'_i) \leq t(u_{i,1}) + r_{i-1} = t(u_i) + r_i$, and the claim holds.
        \item The case for $\theta_i = \theta_{i,1} \land \theta_{i,2}$ is analogous.
        \item Let $\theta_i = \suntil{[a,b]}{\theta_{i,1}}{\theta_{i,2}}$.
              According to the operator semantics and Table~\ref{tab:expansion-rules}, since $\exp^{t(u_i)}(\theta_i) \in \Gamma(u_i)$, any branch that contains $u_i$ must contain%
              \footnote{Unless the branch is rejected before $u_{i,2}$ appears. In this case, however, the claim is trivially true.}
              a node $u_{i,2}$ such that $t(u_{i,2}) \in [a + t(u_i), b + t(u_i)]$ and $\exp^{t(u_{i,2})}(\theta_{i,2}) \in \Gamma(u_{i,2})$,
              and for all time instants $t'$ with $a + t(u_i) \leq t' < t(u_{i,2})$ there must be a node $u_{t'}$ such that $t(u_{t'}) = t'$ and $\exp^{t(u_{t'})}(\theta_{i,1}) \in \Gamma(u_{t'})$.
              Assume w.l.o.g.\ that $\theta_{i-1} = \theta_{i,1}$ (the other case is analogous).
              Consider any such node $u_{t'}$.
              By the induction hypothesis, for all nodes $v$ in the subtree rooted in $u_{t'}$, if we have $\rho \in \Gamma(v)$, then $t(v) \leq t(u_{t'}) + r_{i-1} = t' + r_{i-1}$.
              By the definition of $\tinv(\theta_i)$, we have $r_i = r_{i-1} + b - 1$, and since $t' < b + t(u_i)$ we have $t(v) \leq t' + r_{i-1} < b + t(u_i) + r_{i-1} = t(u_i) + r_i$, and the claim holds.

              The $\sreleaseop$ operator can be treated analogously.
    \end{itemize}

    We now come to $\theta_i$ with $m \leq i \leq n$, for which we prove that for any node $u_i$ such that $\theta_i \in \Gamma(u_i)$,
    and any node $u_i'$ in the subtree rooted at $u_i$, if we have $\rho \in \Gamma(u_i')$ that is an occurrence derived from the expansion of $\theta_i$ along $\pi$,
    then $t(u_i') \leq r_i$.

    Let $\theta_m = \suntil{[a,b]}{\theta_{m,1}}{\theta_{m,2}}$ (the case for $\sreleaseop$ is analogous).
    The argument is similar to the inductive step of the previous claim, except $\theta_m \in \Gamma(u_m)$, instead of $\exp^{t(u_m)}(\theta_m) \in \Gamma(u_m)$.
    Thus, any branch that contains $u_m$ must contain a node $u_{m,2}$ such that $t(u_{m,2}) \in [a, b]$ and $\exp^{t(u_{m,2})}(\theta_{m,2}) \in \Gamma(u_{m,2})$,
    and for all time instants $t'$ with $a \leq t' < t(u_{m,2})$ there must be a node $u_{t'}$ such that $t(u_{t'}) = t'$ and $\exp^{t(u_{t'})}(\theta_{m,1}) \in \Gamma(u_{t'})$.
    We assume w.l.o.g.\ that $\theta_{m-1} = \theta_{m,1}$: our previous claim holds for it.
    Hence, consider any such node $u_{t'}$.
    For all nodes $v$ in the subtree rooted in $u_{t'}$, if we have $\rho \in \Gamma(v)$, then $t(v) \leq t(u_{t'}) + r_{m-1} = t' + r_{m-1}$.
    By the definition of $\tinv(\theta_m)$, we have $r_m = r_{m-1} + b - 1$, and since $t' < b$ we have $t(v) \leq t' + r_{m-1} \leq b - 1 + r_{m-1} = r_m$, and the claim holds.

    To conclude the proof, it suffices to notice that for $m < i \leq n$, the top-level operator in $\theta_i$ is $\land$ or $\lor$.
    Hence, $r_i = r_m$, and $\rho$ appears in descendants $v$ of $u_m$ such that $t(v) \leq r_m = r_i$.
\end{proof}

\begin{proof}[Proof of Lemma~\ref{lemma:jump-sound}]
    We proceed by contradiction. Assume that no branch below $u$ in the basic tableau is accepted, while the subtree rooted at $u'$ contains an accepted branch.

    If the \textsf{STEP} rule is applied, then $u'$ is also the successor of $u$ in the basic tableau, so the accepted branch below $u'$ is already a branch below $u$ in the basic tableau, contradicting the assumption. 
    Hence, in the following we assume that $u'$ is obtained by applying the \textsf{JUMP} rule. Let $k = k^{\ast}(u)$ be the jump size, so that $t(u') = t(u) + k$.

    Since $\Gamma(u)$ is consistent (otherwise $u$ would be rejected by the local inconsistency check), any rejection in the basic tableau must arise from further expansions along branches below $u$.

    We focus on the branch of the basic tableau where we postpone the satisfaction of temporal operators---i.e., obtained by always following the postponement child $\Gamma_\varphi(u_p)$ whenever an active temporal operator is expanded---and apply the \textsf{STEP} rule until we reach a poised node $v$ with time $t(u) + j$. 
    Since no branch below $u$ is accepted by assumption, this branch must be rejected.

    The only structural difference between the basic tableau below $u$ and the jump tableau below $u'$ is that the basic tableau explicitly generates intermediate nodes at time instants $t(u) + 1, \dots, t(u') - 1$, while the jump tableau skips them. 
    Since $k \leq k(u)$, no temporal operator becomes active or expires at a skipped instant.
    Hence, divergence in satisfiability between the two trees can only occur if extracting an invariant from some active temporal operator at an intermediate time step $t(u) + j$, for $1 \leq j < k$, introduces atomic constraints that become inconsistent with constraints derived from some formula in $\Gamma(u)$.
    We now show that no such new contradictions arise.

    Let $\rho$ denote an atomic proposition arising from the extraction of an invariant of some marked temporal operator at an intermediate time, and let $\sigma$ denote an atomic proposition from another formula $\psi$ in $\Gamma(u)$.

    Let $(l_n,r_n) \in N(u)$ be the validity interval (from $\tinv(\exp^{t(u)}(\varphi_1))$) of the invariant producing $\rho$, and let $(l_o,r_o) \in O(u)$ be the validity interval of $\tinv(\psi)$ terminating on the atomic proposition $\sigma$.
    Note that formulas with an active parent are excluded from $O(u)$ because the validity intervals of any derived proposition $\sigma$ are contained in those of their active parent.
    This redundancy is a consequence of Lemma~\ref{lemma:te-limit}, which indirectly shows that the validity intervals of propositions in nested subformulas are always included in the ones of their active ancestors.

    The soundness condition of the \textsf{JUMP} rule, namely \eqref{eq:sound-condition}, guarantees that for all such pairs, either $r_n < l_o$ or $r_o < l_n$.

    \begin{description}
        \item[Case $r_n < l_o$] 
        First, suppose that $\rho$ is nested into a temporal operator within $\varphi_1$.
        Since $j < k \leq k^{\ast}_{\text{sound}}(u) \le l_o - r_n$, we have $r_n + j < l_o$, where $r_n + j$ is the upper validity bound of $\exp^{t(u) + j}(\varphi_1)$.
        Applying Lemma~\ref{lemma:te-limit} to $\exp^{t(u) + j}(\varphi_1)$, any occurrence of $\rho$ can only appear in a node $w_{\rho}$ with $t(w_{\rho}) \le r_n + j$.
        Applying Lemma~\ref{lemma:te-limit} to $\psi$, any occurrence of $\sigma$ derived from the expansion of $\psi$ can only appear in a node $w_{\sigma}$ with $t(w_{\sigma}) \ge l_o$.
        Hence $t(w_{\rho}) \le r_n + j < l_o \le t(w_{\sigma})$, so $\rho$ and $\sigma$ cannot occur together in any node and no inconsistency can arise.

        If $\rho$ is not nested into any temporal operator in $\varphi_1$, then by Lemma~\ref{lemma:te-limit} we have $t(w_{\rho}) = t(u) + j \leq r_n + j < l_o \le t(w_{\sigma})$.
        Note that, by the definition of $O(u)$, $\sigma$ is always nested into at least one temporal operator.
        
        \item[Case $r_o < l_n$]
        Note that $\rho$ must be nested into a temporal operator within $\varphi_1$. Otherwise, we would have $l_n = t(u) > r_o$, meaning that $\sigma$ could only occur in the past.

        Applying Lemma~\ref{lemma:te-limit} to $\psi$, any occurrence of $\sigma$ derived from the expansion of $\psi$ can only appear in a node $w_{\sigma}$ with $t(w_{\sigma}) \leq r_o$. 
        Applying Lemma~\ref{lemma:te-limit} to $\exp^{t(u) + j}(\varphi_1)$, any occurrence of $\rho$ derived from its expansion can only appear in a node $w_{\rho}$ with $t(w_{\rho}) \geq l_n + j$.
        Hence $t(w_{\sigma}) \leq r_o < l_n < l_n + j \leq t(w_{\rho})$, so $\rho$ and $\sigma$ cannot co-occur in any node and no inconsistency can arise.
    \end{description}

    We have shown that no inconsistency can be triggered by invariant extractions at any skipped time $t(u) + j$. 
    Consequently, the postponement branch of the basic tableau cannot be rejected before time $t(u)+k$. At time $t(u)+k$, this branch preserves the same temporal obligations as $u'$.
    Since the subtree rooted at $u'$ contains an accepted branch, the basic tableau rooted at $u$ admits an accepted branch too, contradicting the initial assumptions.
\end{proof}

\begin{proof}[Proof of Theorem~\ref{thm:jump-sound}]
    Let $\mathcal{T}_J$ be the tableau constructed using the JUMP rule, and assume it contains an accepted branch $\mathbf{u} = u_0, \dots, u_n$. We consider the poised nodes in the accepted branch $\mathbf{p} \subseteq \mathbf{u}$. 

    Take a poised node $p_i$ where the JUMP rule is applied, resulting in a node $v$, so that $t(v) = t(p_i) + k^{\ast}(p_i)$. By Lemma~\ref{lemma:jump-sound}, if $k^{\ast}(p_i) < k^{\ast}_{\text{sound}}(p_i)$, the basic tableau subtree rooted at $p_i$ must also contain an accepted branch. In the case where $k^{\ast}(p_i) = k^{\ast}_{\text{sound}}(p_i)$, time instant $t(v)$ will be the first potential time instant where a conflict could arise, but this node is checked via the standard consistency checks both in the basic tableau and the one using the \textsf{JUMP} rule.
\end{proof}

\begin{proof}[Proof of Theorem~\ref{thm:jump-complete}]
    Consider a poised node $u$ such that the \textsf{JUMP} rule is applied to it, and denote the resulting node as $u'$.
    
    Let us prove the claim by contradiction.
    Assume that the basic tableau contains an accepting branch, but all branches originating from $u$ in the tableau with the \textsf{JUMP} rule are rejected.
    This means that every branch generated from the temporal obligations in $\Gamma(u')$ is eventually rejected. 

    If a step condition is triggered, then $u'$ is the successor of $u$ in the basic tableau as well, so the tableau with the \textsf{JUMP} rule would be accepted as well, contradicting the hypothesis. Hence, in the following, we assume that $u'$ is obtained by applying the \textsf{JUMP} with a jump size $k = k^{\ast}(u)$, so that $t(u') = t(u) + k$.

    The only possibility for the basic tableau to contain an accepted branch is that, at some intermediate node $w$ with $t(u) < t(w) < t(u')$, the accepting branch selected the satisfaction of a temporal formula $\psi \in \Gamma(w)$ (through expansion rule $\Gamma_{\psi}(w_1)$) that, if postponed, would otherwise be propagated to $\Gamma(u')$, with $\psi = \suntil{[a,b]}{\varphi_1}{\varphi_2}$ or $\psi = \srelease{[a,b]}{\varphi_2}{\varphi_1}$.
    Let $j = t(w) - t(u)$ and note that $1 \leq j < k$. 

    Since $k \leq k(u)$, $\psi$ must be already active in $u$.    
    Consider node label $\Gamma_A := (\Gamma(u) \setminus \set{\psi}) \cup \set{\exp^{t(u)}(\varphi_2)}$, obtained by satisfying $\psi$ already at $u$, and node label $\Gamma_B^j := (\Gamma(w) \setminus \set{\psi}) \cup \set{\exp^{t(u)+j}(\varphi_2)}$, obtained by satisfying $\psi$ at $w$.
    We show that if the subtree rooted at node labeled $\Gamma_A$ is rejected, then the subtree rooted at node labeled $\Gamma_B^j$ is rejected as well, contradicting that node $w$ is part of an accepting branch.
    
    Since the subtree rooted at $\Gamma_A$ is rejected (as it is explored also by the jump tableau), every branch terminates in a nodes that contains a contradiction, or $\neg \top$.
    Let $\mathcal{V}$ be the set of rejected leaf nodes. For each $v \in \mathcal{V}$, let $C^v \subseteq \Gamma(v)$ be a minimal inconsistent subset of atomic constraints, with $C^v \neq \emptyset$. 

    Partition $C^v$ into $C^v_m \cup C^v_s$, where $C^v_m$ contains propositions derived from the extracted target $\varphi_2$ and $C^v_s$ those derived from $\Gamma(u) \setminus \set{\psi}$. 

    If $C^v_m = \emptyset$ the conflict lies entirely within propositions in $\Gamma(u) \setminus \set{\psi}$, while if $C^v_s = \emptyset$ it lies entirely within the extracted target propositions. In both cases, conflict would be independent of the time at which the target is extracted, thus would arise in $\Gamma_B^j$ as well, and the claim holds.
    The case in which rejection is due to $\neg \top$ being in $C^v_m$ or $C^v_s$ is also analogous.

    We focus on the remaining case $C^v_m \neq \emptyset$ and $C^v_s \neq \emptyset$, and $\neg \top \not\in C^v_m \cup C^v_s$.
    Choose $\sigma \in C^v_s$ and $\rho \in C^v_m$. Note that the constraints $\sigma$ and $\rho$ belong to the same minimal inconsistent set $C^v$, but they are not necessarily pairwise inconsistent.
  
    We analyze the source of the obstacle proposition $\sigma$:
    \begin{itemize}
        \item[(i)]
        If $\sigma \in \Gamma(u)$ has no active parent, then $s = (t(u), t(u)) \in S(u)$.
        
        If $\rho \in \Gamma(u)$, then by Lemma~\ref{lemma:te-limit} there exists $(l_\rho, r_\rho) \in \tinv(\exp^{t(u)}(\varphi_2))$ such that $l_\rho = t(u)$ and $m = (t(u), r_\rho) \in M(u)$.
        If $\rho$ is not nested into any temporal operator, then $m = (t(u), t(u)) \in M(u)$.

        Hence the intersection between $m$ and $s$ contains $t(u)$, enforcing a \textsf{STEP} rule due to \eqref{eq:complete-condition}.

        \item[(ii)] If $\sigma \in \Gamma(u)$ has an active parent, we again show that if propositions $\rho$ and $\sigma$ occur in the same leaf $v$, by Lemma~\ref{lemma:te-limit} their validity intervals intersect.
        
        In fact, if $\rho$ is nested into a temporal operator in $\varphi_2$, we have $m = (l_\rho, r_\rho) \in M(u)$ with $l_\rho \leq t(v) \leq r_\rho$.
        If $\rho$ is not nested into any temporal operator in $\varphi_2$, then we must have $t(v) = t(u)$, and $m = (t(u)+0, t(u)+0) = (t(v), t(v)) \in M(u)$.

        Moreover, $\Gamma(u)$ must contain a formula $\theta$ such that $\theta$ has no active parent, and $\sigma$ appears in $\theta$ (considering its identifier).
        Since $\theta$ has no active parent, its validity intervals are included in $S(u)$, and by applying Lemma~\ref{lemma:te-limit} we have $s = (l_\sigma, r_\sigma) \in S(u)$ such that $l_\sigma \leq t(v) \leq r_\sigma$.
        Then $\set{t(v)} \in m \cap s \neq \emptyset$, and a \textsf{STEP} rule would be executed, contradicting the possibility of making a jump in the first place.
    \end{itemize}

    Since the node labeled with $\Gamma_A$ is also explored by the jump tableau, and by the premise it is rejected, the node labeled with $\Gamma_B^j$ will be rejected too contradicting the assumption that $w$ is part of an accepting branch of the basic tableau and proving the theorem.
\end{proof}

\begin{corollary}
The \textsf{JUMP} rule with the distinct-variable optimization (\S\ref{sec:rule-application}) is sound and complete.
\end{corollary}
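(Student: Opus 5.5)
The plan is to reuse the proofs of Lemma~\ref{lemma:jump-sound}, Theorem~\ref{thm:jump-sound} and Theorem~\ref{thm:jump-complete} unchanged. We isolate the single place where they use conditions \eqref{eq:sound-condition} and \eqref{eq:complete-condition}, and show that the pairs discarded by the shared-variable optimization can never be used there. Both proofs invoke the conditions in the same way. They take a proposition $\rho$ (from an invariant extracted at a skipped instant, or from an early-extracted target) and a proposition $\sigma$ from another formula. They then argue that if $\rho$ and $\sigma$ can jointly contribute to a local inconsistency, their validity intervals (one in $N(u)$ or $M(u)$, the other in $O(u)$ or $S(u)$) must intersect or be ordered in a dangerous way. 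So it suffices to prove the following. Whenever the original proofs needed a pair $(\rho,\sigma)$ to be considered, the optimized check also considers it, because $\rho$ and $\sigma$ either are opposite-polarity occurrences of the same Boolean variable or share a real variable.

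The key step is a lemma about minimal inconsistent sets. Let $C$ be a minimal inconsistent set of atomic constraints, and let $\rho,\sigma \in C$ have identifiers coming from different sources. Then there is a chain $\rho=c_0,c_1,\dots,c_h=\sigma$ in $C$ in which consecutive constraints share a variable. Otherwise, split $C$ into the connected components of the graph linking constraints that share variables. The components use disjoint variable sets, so $C$ is consistent iff each component is. By minimality, $C$ is then a single component. For Boolean variables encoded as $x_p = 1$ / $x_p \neq 1$, two constraints of the same polarity on $x_p$ are identical, so a minimal inconsistent set involving $x_p$ contains both polarities. I will therefore restate the proofs to choose $\rho$ and $\sigma$ \emph{adjacent} in such a chain, rather than arbitrary. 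Concretely, in the partition $C^v = C^v_m \cup C^v_s$ (and the analogous split in the soundness proof), connectedness gives an edge crossing the partition. Picking that edge yields a pair $\rho \in C^v_m$, $\sigma \in C^v_s$ that share a variable, or that are opposite Boolean literals. Such a pair is exactly one that the optimized conditions still examine.

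With this refinement, the remaining case analyses go through verbatim, via Lemma~\ref{lemma:te-limit} and the bounds $k(u)$ and $k^{\ast}_{\text{sound}}(u)$. In the completeness proof, cases (i) and (ii) produce an intersecting pair $(m,s)$ for the chosen adjacent $\rho,\sigma$; this pair is not filtered out, so \eqref{eq:complete-condition} would still force a \textsf{STEP}. In the soundness proof, the case split $r_n<l_o$ versus $r_o<l_n$ and the bound $k^{\ast}_{\text{sound}}$ apply to the crossing pair. The argument shows that this pair cannot co-occur in a node. Every inconsistent set mixing skipped invariants with other formulas must contain such a crossing edge, so no new inconsistency arises.

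The main obstacle is the soundness direction. There a minimal conflict could in principle need a third constraint $c$, linking $\rho$ to $\sigma$ through a variable that neither shares directly. I will handle this by always selecting the crossing edge of the chain. Its endpoints may differ from the original $\rho,\sigma$, but they are still one invariant-derived and one other-derived proposition that appear together in the rejected node. The co-occurrence argument applies to them directly, so the intermediate constraints never need separate treatment.
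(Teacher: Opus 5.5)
Your proposal is correct. Its core idea matches the paper's: a local conflict needs a pair of constraints sharing a variable (or opposite Boolean literals), the optimized guards still inspect such pairs, and Lemma~\ref{lemma:te-limit} supplies the interval facts. The organization differs, though.

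The paper argues by contradiction between the optimized and unoptimized \textsf{JUMP} tableaux, using the unoptimized theorems as black boxes. It takes a rejected node $v$ built only by the unoptimized tableau, notes that some two constraints in $\Gamma(v)$ share a variable, and asserts that this ``can be easily shown'' to block the optimized jump. You instead reopen Lemma~\ref{lemma:jump-sound} and Theorem~\ref{thm:jump-complete} and change only the choice of the witnessing pair $(\rho,\sigma)$.

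Your connectivity lemma (a minimal inconsistent conjunction cannot split into variable-disjoint parts) is sharper than the paper's fact. The paper's fact does not guarantee that the sharing pair straddles the target/obstacle split of $C^v$. Take $x>0$ from the target and $x<y$, $y<0$ from $\Gamma(u)\setminus\set{\psi}$. ``Some two share a variable'' may return $(x<y,\,y<0)$, and the original arbitrary choice may return $(x>0,\,y<0)$, which the optimized \eqref{eq:complete-condition} ignores. Your crossing edge $(x>0,\,x<y)$ is exactly a pair that \eqref{eq:complete-condition} and $k^{\ast}_{\text{sound}}$ still examine. So your route depends on the internals of the earlier proofs rather than only their statements, but it makes rigorous the step the paper leaves informal.

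One clarification for the soundness direction: the two roles there are not a genuine partition. Every proposition at a skipped node of the postponement branch stems from an invariant extracted at a skipped instant. Its identifier also contributes an interval to $O(u)$ through some non-parent-active formula of $\Gamma(u)$. Hence any edge of your chain already serves as $(\rho,\sigma)$. This includes conflicts between invariants of two different operators, which a strict ``mixing'' reading of your last sentence would leave uncovered.
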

\begin{proof}[Proof (sketch)]
The claim follows from the proofs of Theorems~\ref{thm:jump-sound} and \ref{thm:jump-complete} and the following observations:
\begin{itemize}
\item Local inconsistencies in node labels are checked only among \emph{literals}, i.e., atomic propositions in positive or negated form.
    If such propositions are Boolean variables, their conjunction may only be unsatisfiable if the same variable appears both in positive and negated form.
    Thus, two such literals that contain different variables or are the same literal may never cause a node to be rejected by the local consistency check.
\item If a node contains real-valued constraints, it might be the case that, although two constraints do not share any variables, their conjunction with a larger set of constraints is unsatisfiable.
    In this case we use the fact that, if a set of constraints is unsatisfiable, then at least two of them must share a variable.

    Suppose, by contradiction, that the tableau that uses the optimization has an accepting branch, and the one without it has none.
    Let $v$ be a node that appears only in the unoptimized tableau and is rejected due to a contradiction.
    Node $v$ must descend from a poised node $u$ in which the guard \eqref{eq:sound-condition} allows the \textsf{JUMP} rule only when using the optimization, or the $k^{\ast}_{\text{sound}}(u)$ limit is larger when using the optimization, thus allowing for a longer jump.
    However, since $v$ is rejected, $\Gamma(v)$ must contain two real-valued constraints $\rho$ and $\sigma$ that share the same variable.
    Let $\varphi, \psi \in \Gamma(u)$ be the two formulas from which $\rho$ and $\sigma$ respectively derive.
    By Lemma~\ref{lemma:te-limit}, both $\tinv(\varphi)$ and $\tinv(\psi)$ must contain a validity interval that includes $t(v)$.
    This can be easily shown to contradict the fact that the optimized tableau jumps when the unoptimized does not, or that it makes a longer jump.

    A similar argument can be made for completeness along the lines of the proof of Theorem~\ref{thm:jump-complete}.
\end{itemize}
\end{proof}

\section{Random Benchmark Generation}
\label{app:stl-bench}

The \emph{Random-STL} and \emph{Random0-STL} benchmark suites were generated using a stochastic recursive expansion algorithm. The generator is implemented in Rust using the set of functions provided by the \emph{formula} module of the \stlsat{} tool.

\subsection{Generation Algorithm}

The generator produces formulas whose satisfiability is not known beforehand and is ultimately determined by the solving engines.
A formula is created as the conjunction of $j$ independently generated sub-formulas. Each of these sub-formulas is produced by a recursive function.

\paragraph{Recursive Generation}

The generation of a branch is terminated once an atomic proposition is selected (as it has no children). 
The average formula depth is controlled by a parameter $p_{\text{stop\_base}}$, representing the probability of selecting an atomic proposition. 
This probability grows linearly with the current tree depth $d$ such that $P_\text{stop} = \min(1, p_{\text{stop\_base}} \times d)$.

\paragraph{Temporal Operator}

Whenever the recursion continues, the algorithm selects between a temporal or a non-temporal operator based on a parameter $p_{\text{temp}}$.
Whenever a temporal operator is selected, the operator is chosen uniformly from the set $\set{\evenop, \globop, \untilop, \releaseop}$. The corresponding interval $I = [a,b]$ is computed to ensure the global temporal horizon is not exceeded. Let $h$ denote the current accumulated horizon at the given tree depth, $l$ the maximum global horizon parameter, and $i$ the maximum allowable interval length parameter. The maximum permissible time interval length is $\Delta_{\max} = \min(i, l - h)$.

For the \emph{Random-STL} benchmark, the interval $I$ is generated by selecting one of three modes with equal probability:
\begin{itemize}
    \item $a = 0$, and $b \sim \mathcal{U}(0, \Delta_{\max})$.
    \item $a \sim \mathcal{U}(0, \Delta_{\max})$, and $b \sim \mathcal{U}(a, \Delta_{\max})$.
    \item $a \sim \mathcal{U}(0, \Delta_{\max})$, and $b = l - h$.
\end{itemize}
For the \emph{Random0-STL} benchmark, the generation enforces the first mode (with $a = 0$ for all temporal operators).

\paragraph{Atomic Proposition}

When an atomic proposition is selected, it is sampled from a pre-generated pool of $c$ constraints. These constraints operate over the real variables $S = \{x_1, \dots, x_r\}$ and can take three forms, which are selected uniformly:
\begin{enumerate}
    \item \textbf{Simple Bounds:} $x_i \bowtie k$.
    \item \textbf{Difference Logic:} $x_i - x_j \bowtie k$ for $i \neq j$.
    \item \textbf{Linear Constraints:} $\sum_{x \in S'} x \bowtie k$, where $S' \subseteq S$ is a randomly selected subset.
\end{enumerate}
The relation $\bowtie$ is sampled from $\{<, \leq, >, \geq, =, \neq\}$, while the constant $k$ is a randomly generated rational number.

\subsection{Parameter Space and Suite Configuration}

The \emph{Random-STL} suite was generated by sweeping across all combinations of the parameter space. Table~\ref{tab:random_stl_params} presents the set of possible parameter values for each defined parameter. For each combination of parameters, $5$~formulas have been generated, for a total of $2000$~formulas.
The \emph{Random0-STL} variation uses identical parameter space value combinations but restricts all generated temporal operator intervals to the form $[0, b]$, as detailed in the previous section.

\begin{table}[h]
    \centering
    \begin{tabular}{ll}
        \toprule
        \textbf{Parameter} & \textbf{Values Sampled} \\
        \midrule
            Real Variables ($r$) & \set{5} \\
            Real Constraints ($c$) & \set{10} \\
            Max Horizon ($l$) & \set{100, 500, 1000, 5000, 10000} \\
            Max Interval Length ($i$) & \set{50, 1000} \\
            Stop Probability ($p_{\text{stop\_base}}$) & \set{0.05, 0.1, 0.15, 0.2, 0.25} \\
            Temporal Probability ($p_{\text{temp}}$) & \set{0.33, 0.5, 0.75, 0.95} \\
            Conjunctions ($j$) & \set{5, 10} \\
        \bottomrule
    \end{tabular}
    \caption{Parameter space for STL benchmark suites.}
    \label{tab:random_stl_params}
\end{table}

}{}

\end{document}